\documentclass[11pt,a4paper]{article}
\usepackage{amssymb,amsmath,amsthm}
\usepackage[english]{babel}
\usepackage{a4wide}
\usepackage{microtype}
\usepackage{hyperref}
\hypersetup{colorlinks=true,citecolor=blue,linkcolor=blue,urlcolor=blue}
\usepackage{dsfont}
\usepackage{tikz}
\usepackage{float}
\usetikzlibrary{arrows}
\graphicspath{ {./Images/} }

\newtheorem{theorem}{Theorem} 
\newtheorem{lemma}[theorem]{Lemma} 
 
\newtheorem{proposition}[theorem]{Proposition} 
 
\theoremstyle{definition} 
\newtheorem{remark}[theorem]{Remark} 
\newtheorem{example}[theorem]{Example} 
\newtheorem{definition}[theorem]{Definition}

\newcommand{\bin}{\text{bin}}
\newcommand{\zeros}{\text{zeros}}
\newcommand{\vol}{\text{vol}}
\newcommand{\val}{\text{Val}}
\newcommand{\cut}{\textsf{Cut}}

\newcommand{\rep}{\text{rep}}
\newcommand{\zs}{\text{Z}}

\newcommand{\CSPP}{\textnormal{CSP($P$)}}

\begin{document}

\author{Eden Pelleg\\
University of Oxford
\and
Stanislav \v{Z}ivn\'y\\
University of Oxford
}

\title{Additive Sparsification of CSPs\thanks{An extended abstract of this work
appeared in the \emph{Proceedings of the 29th European Symposium on Algorithms} (ESA'21)~\cite{pz21:esa}. Stanislav \v{Z}ivn\'y was supported by a Royal Society University Research Fellowship. This project has received funding from the European Research Council (ERC) under the European Union's Horizon 2020 research and innovation programme (grant agreement No 714532). The paper reflects only the authors' views and not the views of the ERC or the European Commission. The European Union is not liable for any use that may be made of the information contained therein. This work was also supported by UKRI EP/X024431/1. For the purpose of Open Access, the authors have applied a CC BY public copyright licence to any Author Accepted Manuscript version arising from this submission. All data is provided in full in the results section of this paper.}}

\maketitle

\begin{abstract}

  Multiplicative cut sparsifiers, introduced by Bencz\'ur and
  Karger~[STOC'96], have proved extremely influential and found various
  applications. Precise characterisations were established for sparsifiability
  of graphs with other 2-variable predicates on Boolean domains by Filtser and
  Krauthgamer~[SIDMA'17] and non-Boolean domains by Butti and
  \v{Z}ivn\'y~[SIDMA'20].

  Bansal, Svensson and Trevisan [FOCS'19] introduced a weaker notion of
  sparsification termed ``additive sparsification'', which does not require
  weights on the edges of the graph. In particular, Bansal et al. designed
  algorithms for additive sparsifiers for cuts in graphs and hypergraphs.
   
  As our main result, we establish that \emph{all} Boolean Constraint
  Satisfaction Problems (CSPs) admit an additive sparsifier; that is, for every
  Boolean predicate $P:\{0,1\}^k\to\{0,1\}$ of a fixed arity $k$, we show that
  $\CSPP$ admits an additive sparsifier. 
  Under our newly introduced notion of all-but-one sparsification for
  non-Boolean predicates, we show that $\CSPP$ admits an additive sparsifier for
  \emph{any} predicate $P:D^k\to\{0,1\}$ of a fixed arity $k$ on an arbitrary
  finite domain $D$.

\end{abstract}

\section{Introduction}
\label{sec:intro}

Graph sparsification is the problem of, given a graph $G = (V,E)$ with
quadratically many (in $|V|$) edges, 
finding a sparse subgraph $G_\varepsilon =
(V,E_\varepsilon\subseteq E)$ such that important properties of $G$ are
preserved in $G_\varepsilon$. Sparse in this context usually means with sub-quadratically
many edges, though in this work we require (and can achieve) linearly many edges.

One of the most studied properties of preservation is the size of cuts. If $G =
(V,E,w)$ is an undirected weighted graph with $w:E\to \mathbb{R}_{>0}$, given
some $S\subseteq V$, the cut of $S$ in $G$ is 
\[\cut_G(S) = \sum_{\substack{\{u,v\}\in E\\|\{u,v\}\cap S| = 1}}w(\{u,v\}),\]
the sum of
weights of all edges connecting $S$ and $S^c = V\setminus S$. In an influential
paper, Bencz\'{u}r and Karger~\cite{mult_nlogn} introduced cut sparsification
with a multiplicative error. In particular,~\cite{mult_nlogn} showed that for
any
graph $G = (V,E,w)$ and any error parameter $0<\varepsilon<1$, there exists a
sparse subgraph $G_\varepsilon = (V,E_\varepsilon\subseteq E, w')$ with $O(n(\log
n)\varepsilon^{-2})$ edges (and new weights $w'$ on the edges in
$E_\varepsilon$), such that for every
$S\subseteq V$ we have 
\[\cut_{G_\varepsilon}(S)\in (1\pm \varepsilon)\cut_G(S).\]
This was later improved by Batson, Spielman and
Srivastava~\cite{mult_basic} to a subgraph with $O(n\varepsilon^{-2})$ many
edges. Andoni, Chen, Krauthgamer, Qin, Woodruff and Zhang showed that the
dependency on $\varepsilon$ is optimal~\cite{mult_optimality}.

The ideas from cut sparsification paved the way to various generalisations,
including streaming~\cite{Ahn09:icalp}, sketching~\cite{mult_optimality}, cuts
in hypergraphs~\cite{cut_hypergraphs,Newman13:sicomp}, spectral
sparsification~\cite{Spielman04:stoc,Spielman11:sicomp-graph,cut_spectral,Fung19:sicomp,spectral_hypergraphs}
and the consideration of other predicates besides
cuts~\cite{mult_other_predicates}. In this work, we focus on the latter.

The cut sparsification result in~\cite{mult_basic} was explored for other
Boolean binary predicates by Filtser and
Krauthgamer~\cite{mult_other_predicates}, following a suggestion to do so by
Kogan and Krauthgamer in~\cite{cut_hypergraphs}. Filtser and Krauthgamer
found~\cite{mult_other_predicates} a necessary and sufficient
condition on the predicate for the graph to be sparsifiable (in the sense
of~\cite{mult_basic}). In particular, \cite{mult_other_predicates} showed that not all
Boolean binary predicates are sparsifiable. Later, Butti and
\v{Z}ivn\'{y}~\cite{mult_larger_domains} generalised the result
from~\cite{mult_other_predicates} to arbitrary finite domain binary
predicates.

We remark that~\cite{mult_other_predicates,mult_larger_domains} use the
terminology of \emph{constraint satisfaction problems} (CSPs) with a fixed
predicate $P$. This is is equivalent to a (hyper)graph $G$ with a fixed
predicate. Indeed, the vertices of $G$ correspond to the variables of the CSP
and the (hyper)edges of $G$ correspond to the constraints of the CSP. If the
fixed predicate $P$ is not symmetric, the (hyper)edges of $G$ are directed. We
will mostly talk about sparsification of (hyper)graphs with a fixed predicate
but this is equivalent to the CSP view.

Recently, while trying to eliminate the requirement for the introduction of new
weights for the sparse subgraph, Bansal, Svensson and Trevisan~\cite{additive}
have come up with a new sparsification notion with an additive error term. They
have shown (cf. Theorem~\ref{thm:additive_cut} in Section~\ref{sec:prelims})
that under their notion any undirected unweighted hypergraph has a sparse
subhypergraph which preserves all cuts up to some additive term. 

\paragraph{Motivation}
The relatively recent notion of additive sparsification has not yet been
explored to the same extent as the notion of multiplicative sparsification has
been. We believe that this notion has a lot of potential for applications as the
sparsifiers are not weighted, unlike multiplicative sparsifiers, and the main
restriction of multiplicative sparsifiers in applications appears to be the
number of distinct weights required in sparsifiers. For some graphs (such as the
``barbell graph'' -- two disjoint cliques joined by a single edge), any
nontrivial multiplicative sparsifier requires edges of different weights. In any
case, the authors find the notion of additive sparsification interesting in its
own right, independently of applications. We refer the reader to~\cite{additive}
for further details and a discussion.

The goal of our work is to understand how the notion of additive sparsification
developed in~\cite{additive} for cuts behaves on (hyper)graphs with other
predicates (beyond cuts), deriving inspiration from the generalisations of cuts
to other predicates in the multiplicative setting established
in~\cite{mult_other_predicates, mult_larger_domains}. In particular, already
Boolean binary predicates include interesting predicates such as the \emph{uncut
edges} (using the predicate $P(x,y)=1$ iff $x=y$), \emph{covered edges} (using
the predicate $P(x,y)=1$ iff $x=1$ or $y=1$), or \emph{directed cut edges} (using
the predicate $P(x,y)=1$ iff $x=0$ and $y=1$). While such graph problems are
well-known and extensively studied, it is not clear whether one should expect
them to be sparsifiable or not. For instance, as mentioned before, not all (even
Boolean binary) predicates are sparsifiable
multiplicatively~\cite{mult_other_predicates}. Are there some predicates that
are not additively sparsifiable?

\subsection{Contributions}

\paragraph{Boolean predicates}
Our main result, Theorem~\ref{thm:additive_boolean} in Section~\ref{sec:main}, shows that
\emph{all} hypergraphs with \emph{constant} uniformity $k$, directed or undirected,
admit additive sparsification with respect to all Boolean predicates $P:\{0,1\}^k\to
\{0,1\}$; the number of hyperedges of the sparsifier with error
$\varepsilon>0$ is $O\left(n\varepsilon^{-2}\log\frac{1}{\varepsilon}\right)$,
where the $O(\cdot)$ hides a factor that depends on $k$.
This result has three ingredients. First, we
observe that the result in~\cite{additive} also holds true for directed
hypergraphs. Second, we use a reduction via the $k$-partite $k$-fold covers of
hypergraphs to the already solved case of Boolean \cut. Finally, we use linear
algebra to prove the correctness of the reduction. While the reduction via the
$k$-partite $k$-fold cover was used in previous works on multiplicative
sparsification~\cite{mult_other_predicates,mult_larger_domains},
the subsequent non-trivial linear-algebraic
analysis (Proposition~\ref{prp:even_not_all_one}) is novel and constitutes our
main technical contribution, as well as our result that, unlike in the
multiplicative setting, all Boolean predicates can be (additively) 
sparsified. 
We also show that our results immediately apply to the more general setting
where different hyperedges are associated with different predicates (cf.
Remark~\ref{rem:gen}). This corresponds to CSPs with a fixed constraint language
(of a finite size) rather than just a single predicate.

\paragraph{Non-Boolean predicates}
We introduce a notion of sparsification that generalises the Boolean case to
predicates on non-Boolean domains, i.e.\ a notion capturing predicates of the
form $P:D^k\to\{0,1\}$, where $D$ is an arbitrary fixed finite set with $|D|\geq
2$. We call this type of sparsification ``all-but-one'' sparsification since
the additive error term includes the maximum volume of $|D|-1$ (out of $|D|$) parts,
where the volume of a subset is the sum of the degrees in the subset. (The
precise definition can be found in Section~\ref{sec:non-Boolean}.)
By building on the techniques used to establish our main result, we show that all
hypergraphs (again, directed or undirected) admit additive all-but-one
sparsification with respect to all predicates. This is stated as Theorem~\ref{thm:larger_domains}
in Section~\ref{sec:non-Boolean}. 
We also show, in Section~\ref{sec:optimal}, that our notion of all-but-one
sparsification is, in some sense, optimal. 

\paragraph{Comparison to previous work}
As mentioned above,
our sparsifiability result is obtained by a reduction via the $k$-partite
$k$-fold cover to the cut case established in~\cite{additive}. A reduction via
the $k$-partite $k$-fold cover was also used (for $k=2$) in previous work on multiplicative sparsification~\cite{mult_other_predicates,mult_larger_domains}. In particular, the
correctness of the reduction for Boolean
binary predicates in~\cite{mult_other_predicates} is done via an ad hoc case
analysis for 11 concrete predicates. In the generalisation to binary predicates
on arbitrary finite domains in~\cite{mult_larger_domains}, the correctness is 
proved via a combinatorial property of bipartite graphs without a certain
$4$-vertex graph\footnote{A bipartite graph on four vertices with
each part of size two and precisely one edge between the two parts.}
as a subgraph and a reduction to cuts with more than two parts. 

In our case, we use the same black-box reduction via the $k$-partite $k$-fold
cover. Thus the reduction itself is pretty straightforward, although the
analysis is not. In fact, we find it surprising and unexpected that the
$k$-partite $k$-fold cover works in the additive setting. Our key contribution
is the proof of its correctness. A few simple reductions get us to the most
technically involved case, in which $k$ is even and the $k$-ary predicate
satisfies $P(1,\ldots,1)=0$. Additive sparsifiability of such predicates is
established in Proposition~\ref{prp:even_not_all_one}. Unlike in the
multiplicative setting, it is not clear how to do this in a straightforward way
similar to~\cite{mult_other_predicates,mult_larger_domains}. Instead, we
associate with a given predicate $P$ a vector $v_P$ in an appropriate vector
space, identify special vectors that can be shown additively sparsifiable
directly, show that linear combinations preserve sparsifiability, and argue that
$v_P$ can be generated by the special vectors. The latter is the most technical
part of the proof. While there are several natural ideas how to achieve this in
a seemingly simpler way (such as arguing that the special vectors form a basis),
we have not managed to produce a simpler or shorter proof.

The result in~\cite{additive} also works for non-constant $k$. We emphasise that
we deal with constant $k$, which is standard in the CSP literature in that
the predicate (or a set of predicates) is fixed and not part of the input. For
constant $k$, the representation of predicates is irrelevant (cf.
Remark~\ref{rmk:not_constant}). Thus we do not keep track of (and have not tried
to optimise) the precise dependency of the reduction on the predicate arity $k$
(or the domain size $q=|D|$).

\paragraph{Related work}
The already mentioned spectral sparsification~\cite{Spielman04:stoc} is a
stronger notion than cut sparsification as it requires that not only cuts but
also the Laplacian spectrum of a given graph should be (approximately)
preserved~\cite{Spielman11:sicomp-graph,cut_spectral,Fung19:sicomp,spectral_hypergraphs,additive}.

Our focus in this article is on \emph{edge sparsifiers} (of cuts and generalisations via
local predicates). There are also vertex sparsifiers, in which one reduces the
number of vertices. Vertex sparsifiers have been studied for cut sparsification
(between special vertices called
terminals)~\cite{Hagerup98:jcss,Moitra09:focs,Leighton10:stoc,Chalermsook21:soda}
as well as for spectral sparsification~\cite{Kyng16:stoc}. 

Sparsification in general is about finding a sparse sub(hyper)graph while preserving
important properties of interest. In addition to cut sparsifiers, another well
studied concept is that of \emph{spanners}.
A spanner of a graph is a (sparse) subgraph that approximately preserves 
distances of shortest paths. Spanners have been studied in great detail both in the
multiplicative~\cite{Awerbuch85:jacm,Peleg89:jgt,Althofer1993:dcg,Cohen98:sicomp,Awerbuch98:sicomp,Baswana07:rsa,Roditty05:icalp}
and
additive~\cite{Aingworth99:sicomp,Dor00:sicomp,Bollobas05:sidma,Baswana05:soda,Woodruff10:icalp,Chechik13:soda}
setting. Emulators are a generalisation of spanners in which the sparse
graph is not required to be a subgraph of the original graph. We refer the reader to a nice
recent survey of Elkin and Neimain for more details~\cite{Elkin20:survey}.

\section{Preliminaries}
\label{sec:prelims}

For an integer $k$, we denote by $[k]$ the set $\{0,1,\hdots,k-1\}$. 
All graphs and hypergraphs\footnote{We use the standard definition of hypergraphs, in
which every hyperedge is an ordered tuple of vertices.} in this paper are unweighted.

For an assignment $a:V\to S$ from the set of vertices of a (hyper)graph to some
set $S$ containing $0$, we denote by $\zs_a = \{v\in V: a(v)=0\}$ the set of
vertices mapped to $0$.

If $0\leq i \leq r^k - 1$ is an integer, we denote by $\rep_{r,k}(i)$ the
representation of $i$ in base $r$ as a vector in $\mathbb{R}^k$, where the first
coordinate stands for the most significant digit, and the last coordinate for
the least significant digit. For the special case $r=2$, we use the notation $\bin_k(i)$
for the binary representation of $i$.

We denote by $v[j]$ the $j$-th coordinate of the vector $v$, counting from $0$. 

For an integer $0\leq i \leq 2^k - 1$, we use $\zeros_k(i) = \{\ell\in[k]:
\bin_k(i)[\ell] = 0\}$; for example $\zeros_6(52) = \{2,4,5\}$, since
$\bin_6(52) = (1,1,0,1,0,0)$.

We now define the value of an assignment on a hypergraph with a fixed predicate.

\begin{definition}
\label{def:values}
  Let $G = (V,E)$ be a directed $k$-uniform hypergraph and let $P:
  D^k\to \{0,1\}$ be a $k$-ary predicate on a finite set $D$. Given an
  assignment $a:V\to D$ of $G$, the \emph{value} of $a$ is defined by
  $\val_{G,P}(a) = \sum_{(v_1,\hdots ,v_k)\in E} P(a(v_1),\hdots ,a(v_k))$.
  If $G$ is undirected and $P$ is order invariant,\footnote{$P(b_1,\hdots,b_k) = P(b_{\sigma(1)},\hdots,b_{\sigma(k)})$ for all $b_1,\hdots,b_k\in D$ and every permutation $\sigma$ on the set $\{1,\ldots,k\}$.} we define
  $\val_{G,P}(a) = \sum_{\{v_1,\hdots ,v_k\}\in E} P(a(v_1),\hdots ,a(v_k))$.\footnote{The terms are well defined since $P$ is order invariant.}
\end{definition}

The notion of additive sparsification was first introduced in~\cite{additive}
for cuts in graphs and hypergraphs. In order to define it, we will need the
$\cut: \{0,1\}^k\to \{0,1\}$ predicate defined by
$\cut(b_1,\hdots ,b_k) = 1 \iff \exists i,j, b_i\ne b_j$.
Given a hypergraph $G=(V,E)$ and a set $U\subseteq V$, we denote by $\vol_G(U)$
the \emph{volume} of $U$, defined as the sum of the degrees in $G$ of all
vertices in $U$.

\begin{definition}
\label{def:additive_cut}
Let $G = (V,E)$ be an undirected $k$-uniform hypergraph, and denote $|V| = n$. We
say that $G$ \emph{admits additive cut sparsification} with error
$\varepsilon$ using $O(f(n,\varepsilon))$ hyperedges if there exists a
subhypergraph $G_\varepsilon = (V, E_\varepsilon\subseteq E)$ with
$|E_\varepsilon| = O(f(n,\varepsilon))$, called an \emph{additive sparsifier}
of $G$, such that for every assignment $a:V\to \{0,1\}$ we have
\begin{equation}
\label{eq:cut_additive}
\left|\frac{|E|}{|E_\varepsilon|}\val_{G_\varepsilon,\cut}(a) - \val_{G,\cut}(a)\right|\leq \varepsilon(d_G|\zs_a| + \vol_G(\zs_a)),
\end{equation}
where $d_G$ is the average degree of $G$.
\end{definition}

Note that (\ref{eq:cut_additive}) can also be written as
\[\frac{|E|}{|E_\varepsilon|}\val_{G_\varepsilon,\cut}(a)\in \val_{G,\cut}(a)
\pm \varepsilon(d_G|\zs_a| + \vol_G(\zs_a)),\] which explains the use of the
term ``additive'' for the error. 

Bansal, Svensson and Trevisan~\cite{additive} showed the following sparsification result: 

\begin{theorem}[\protect{Additive Cut Sparsification~\cite[Theorem 1.3]{additive}}] 
\label{thm:additive_cut}
Let $G = (V,E)$ be an undirected $n$-vertex $k$-uniform hypergraph, and $\varepsilon>0$. Then $G$ admits additive cut sparsification with error $\varepsilon$ using $O\left(\frac{n}{k}\varepsilon^{-2}\log(\frac{k}{\varepsilon})\right)$ hyperedges.
\end{theorem}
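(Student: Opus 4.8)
The plan is to prove Theorem~\ref{thm:additive_cut} by \emph{uniform sampling with rescaling}; we may assume $\varepsilon<1$. Let $m=\Theta\!\left(\frac{n}{k}\varepsilon^{-2}\log\frac{k}{\varepsilon}\right)$ and obtain $E_\varepsilon$ by keeping each hyperedge of $E$ independently with probability $p=m/|E|$, so that $\mathbb{E}|E_\varepsilon|=m$ (a routine conditioning handles the fluctuation of $|E_\varepsilon|$, or one may instead take a uniformly random $m$-subset, whose edge-indicators are negatively associated so that the Chernoff bound below still applies). For any $a:V\to\{0,1\}$, the estimator $\frac{|E|}{|E_\varepsilon|}\val_{G_\varepsilon,\cut}(a)$ is $\frac1p$ times the number of sampled hyperedges on which $a$ is non-constant, and hence has mean exactly $\val_{G,\cut}(a)$ by linearity of expectation.

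The feature that makes the additive error cheap is the inequality $\val_{G,\cut}(a)\le\vol_G(\zs_a)$: any hyperedge on which $a$ is non-constant contains a vertex mapped to $0$, so it is counted in $\sum_{v\in\zs_a}\deg_G(v)$. Fix $a$ with $S:=\zs_a\ne\emptyset$, put $c:=\val_{G,\cut}(a)\le\vol_G(S)$ and $\lambda:=\varepsilon(d_G|S|+\vol_G(S))$, so that $\lambda\ge\varepsilon c$ and $\lambda\ge\varepsilon d_G|S|$. A two-sided multiplicative Chernoff bound on the sampled count — splitting on whether $\lambda\le c$, where one uses $\lambda^2/c\ge\varepsilon\lambda$, or $\lambda>c$ — gives
\[
\Pr\!\left[\,\Big|\tfrac{|E|}{|E_\varepsilon|}\val_{G_\varepsilon,\cut}(a)-c\Big|>\lambda\,\right]\ \le\ 2\exp\!\big(-\tfrac13 p\,\varepsilon\,\lambda\big)\ \le\ 2\exp\!\big(-\tfrac13 p\,\varepsilon^2 d_G|S|\big),
\]
and since $d_G=k|E|/n$ and $p=m/|E|$ the exponent is $\Omega\!\big(\tfrac{m\varepsilon^2k}{n}|S|\big)$.

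Summing this over all assignments with $|\zs_a|=t$, of which there are at most $\binom{n}{t}\le(en/t)^t$, would only force $\tfrac{m\varepsilon^2 k}{n}\gtrsim\log n$, giving the weaker bound $m=O(\tfrac{n}{k}\varepsilon^{-2}\log n)$; replacing $\log n$ by $\log\tfrac{k}{\varepsilon}$ is the crux. The point to exploit is that \emph{both} $\val_{G,\cut}(a)$ and the error budget $\varepsilon\sum_{v\in\zs_a}(d_G+\deg_G(v))$ are localised at the zero-vertices and weighted by their degrees, so low-degree vertices are essentially free. Concretely, one fixes a degree threshold $\tau$ of order $\tfrac1k\varepsilon^{-2}\log\tfrac{k}{\varepsilon}$, keeps outright every hyperedge incident to a vertex of degree $\le\tau$ (there are $O(n\tau)$ of these — matching the target size — and they render the corresponding part of every cut exact), and samples only among the remaining ``heavy'' hyperedges, whose contribution to any cut depends only on $\zs_a$ restricted to the $O(k|E|/\tau)$ heavy vertices; for such sets the Chernoff exponent (now of order $p_{\mathrm{heavy}}\varepsilon^2\tau|S_{\mathrm{heavy}}|$) beats the reduced count $\binom{O(k|E|/\tau)}{|S_{\mathrm{heavy}}|}$, and an iteration of this split — recursively sparsifying the heavy part with geometrically shrinking error so that the total additive error remains $O(\varepsilon)$ — removes the residual $|E|$-dependence. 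Making this degree-bucketing/recursion precise (choosing $\tau$ and the sampling rates, controlling error accumulation across scales, and counting the ``relevant'' cuts at each scale) is the main obstacle; everything else is the elementary sampling analysis above. Since $\cut$ is order-invariant there is no additional subtlety for undirected hypergraphs, and the same argument applies to directed $k$-uniform hypergraphs.
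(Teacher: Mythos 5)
This theorem is stated in the paper purely as a citation of Bansal, Svensson, and Trevisan~\cite[Theorem 1.3]{additive}; the paper under review offers no proof of it and uses it strictly as a black box (see Proposition~\ref{prp:directed_cut} and onward). There is therefore no internal proof to compare against, and a blind proof attempt here is really an attempt to reprove the main result of a different paper.

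Your sampling analysis up to and including the union-bound calculation is sound: the bound $\val_{G,\cut}(a)\le\vol_G(\zs_a)$, the case split $\lambda\le c$ versus $\lambda>c$, and the resulting exponent $\Omega(m\varepsilon^2 k|S|/n)$ are all correct, and you correctly diagnose that a naive union bound over the $\binom{n}{t}$ choices of $\zs_a$ only yields $m = O(\tfrac{n}{k}\varepsilon^{-2}\log n)$. You also correctly identify the mechanism (degree thresholding: retain hyperedges touching low-degree vertices exactly, sample among the rest, exploit that the union bound then ranges over a smaller ground set) that the literature uses to trade $\log n$ for $\log\tfrac{k}{\varepsilon}$.

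However, the proposal stops precisely at the step that makes the theorem nontrivial. You write that ``making this degree-bucketing/recursion precise\ldots\ is the main obstacle,'' and indeed what you sketch does not close: with one threshold $\tau$ the number of heavy vertices is $h\le k|E|/\tau$, so the union bound requires sampling probability $p_{\mathrm{heavy}}\gtrsim\frac{\log(h/t)}{\varepsilon^2\tau}$, and getting $p_{\mathrm{heavy}}|E|$ down to the target $O(\tfrac{n}{k}\varepsilon^{-2}\log\tfrac{k}{\varepsilon})$ does not follow from any choice of a single $\tau$ when $|E|$ is large. The proposed fix --- iterating the split with geometrically shrinking error across degree scales --- is plausible but entirely unargued: you do not specify the bucket boundaries, the per-scale sampling rates, how the per-scale additive errors (each measured against a different subset of $\zs_a$) recombine into the global budget $\varepsilon(d_G|\zs_a|+\vol_G(\zs_a))$, or why the total kept-plus-sampled edge count telescopes to the claimed bound. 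As written, this is a research program rather than a proof, and the theorem should simply be cited as the paper does.
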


\begin{remark}
\label{rmk:undirected}
We call a predicate $P$ \emph{symmetric} if it is order invariant (as in
  Definition~\ref{def:values}). Since Theorem~\ref{thm:additive_cut} deals with
  only \emph{undirected} hypergraphs, it is not clear how to generalise it to non-symmetric predicates directly, since the value of such predicates on undirected hypergraphs is not defined. Therefore, our course of action will be first to prove it for the case of directed hypergraphs, and then generalise it to other predicates on directed hypergraphs. In fact, by doing this we also prove the result for undirected hypergraphs with symmetric predicates, since hyperedges can be given arbitrary directions without changing 
the average degree of $G$, or the volume in $G$, or the value of the predicate in
any assignment.
\end{remark}

\begin{remark}
Throughout this paper we only discuss the existence of sparsifiers and do not
  mention the time complexity to find them. However, the (implicit) time complexity results
  from~\cite{additive} apply in our more general setting as well since the sparsifiers we find are in fact the same sparsifiers for all predicates, including cuts (cf. Remark~\ref{rmk:same_subhypergraph}).
\end{remark}

An important tool we use to prove our results is the $k$-partite
$k$-fold cover of a hypergraph. This construction is a well known one, and has
been used for multiplicative sparsification (for $k=2$)
in~\cite{mult_other_predicates} and~\cite{mult_larger_domains}.

\begin{definition}
\label{def:cover}
Let $G = (V,E)$ be a directed $k$-uniform hypergraph. The \emph{$k$-partite $k$-fold cover} of $G$ is the hypergraph $\gamma(G) = (V^\gamma,E^\gamma)$ where
  \[V^\gamma = \{v^{(0)}, v^{(1)},\hdots ,v^{(k-1)}: v\in V\},\]
  \[E^\gamma = \{(v_1^{(0)}, v_2^{(1)},\hdots ,v_k^{(k-1)}): (v_1,\hdots ,v_k)\in E\}.\]
If $G$ is undirected we define the cover in the same way except 
  \[E^\gamma = \{\{v_{\sigma(1)}^{(0)}, v_{\sigma(2)}^{(1)},\hdots ,v_{\sigma(k)}^{(k-1)}\}: \{v_1,\hdots
  ,v_k\}\in E\}, \mbox{ $\sigma$ a permutation on $\{1,2,\ldots,k\}$}\}\] so for each hyperedge in $G$ we get $k!$ hyperedges in
  $\gamma(G)$ in this case.
\end{definition}
If $k=2$ then $\gamma(G)$ corresponds to the well-known \emph{bipartite double
cover} of $G$~\cite{Brualdi80:jgt}.

\section{Sparsification of Boolean Predicates}
\label{sec:main}

As mentioned in Section~\ref{sec:intro}, we begin by observing that
Theorem~\ref{thm:additive_cut} also works for directed hypergraphs. 
(We emphasise that we treat $k$ as a constant, cf. Remark~\ref{rmk:not_constant}.)

We will need a notation for the undirected equivalent of a directed hypergraph.

\begin{definition}
Given a directed $k$-uniform hypergraph $G = (V,E)$, the \emph{undirected equivalent} of $G$ is $\Lambda(G) = (V, \overline{E})$ where $\overline{E} =
  \{\{v_1,\hdots,v_k\}: (v_1,\hdots,v_k)\in E\}$.
\end{definition}

In other words, $\Lambda(G)$ is obtained by ``forgetting'' the directions of the hyperedges of $G$ (and ignoring duplicates if they exist).

\begin{proposition}
\label{prp:directed_cut}
Let $G = (V,E)$ be a directed $n$-vertex $k$-uniform hypergraph, and $\varepsilon>0$. Then $G$ admits additive cut sparsification with error $\varepsilon$ using $O\left(n\varepsilon^{-2}\log\frac{1}{\varepsilon}\right)$ hyperedges.
\end{proposition}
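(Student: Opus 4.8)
The plan is to reduce the directed case to the undirected case already handled by Theorem~\ref{thm:additive_cut}. Given a directed $k$-uniform hypergraph $G=(V,E)$, form its undirected equivalent $\Lambda(G)=(V,\overline{E})$. The crucial observation is that for the $\cut$ predicate, directions of hyperedges are irrelevant to the value: $\cut(b_1,\ldots,b_k)$ depends only on the multiset $\{b_1,\ldots,b_k\}$, so for any assignment $a:V\to\{0,1\}$ we have $\val_{G,\cut}(a)=\val_{\Lambda(G),\cut}(a)$ \emph{provided} the passage from $E$ to $\overline{E}$ does not collapse distinct directed hyperedges onto the same undirected one. The one subtlety is precisely the parenthetical ``ignoring duplicates'' in the definition of $\Lambda$: if two directed tuples in $E$ have the same underlying set, $\Lambda(G)$ has fewer hyperedges than $G$, which changes $|E|$, the average degree, and the volume. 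I would handle this either by assuming $G$ has no such ``parallel'' hyperedges (harmless, since duplicated constraints can be merged or, if one insists on multigraphs, by working with multi-hypergraphs throughout — note Theorem~\ref{thm:additive_cut} and~\cite{additive} are stated for hypergraphs but the argument there is insensitive to multiplicities), or by simply noting that each undirected $k$-set arises from at most $k!$ directed tuples, a constant, so $|E|$ and $|\overline{E}|$ differ by at most a constant factor and likewise for degrees and volumes.

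Concretely, the steps are: (1) Apply Theorem~\ref{thm:additive_cut} to $\Lambda(G)$ with error parameter $\varepsilon$, obtaining an undirected sparsifier $(V,\overline{E}_\varepsilon\subseteq\overline{E})$ with $O\!\left(\frac{n}{k}\varepsilon^{-2}\log\frac{k}{\varepsilon}\right)$ hyperedges satisfying~\eqref{eq:cut_additive} for $\Lambda(G)$. (2) Lift $\overline{E}_\varepsilon$ back to a subset $E_\varepsilon\subseteq E$ by choosing, for each undirected hyperedge kept, a directed preimage in $E$; this keeps the hyperedge count the same, and since $k$ is constant the bound becomes $O\!\left(n\varepsilon^{-2}\log\frac{1}{\varepsilon}\right)$ after absorbing the $k$-dependent factors (using $\log\frac{k}{\varepsilon}=O_k(\log\frac1\varepsilon)$ for $\varepsilon$ small, and noting the statement's $O(\cdot)$ hides the $k$-dependence as stated in Remark~\ref{rmk:not_constant}'s surrounding discussion). (3) Verify the inequality transfers: for every $a:V\to\{0,1\}$,
\[
\left|\frac{|E|}{|E_\varepsilon|}\val_{G_\varepsilon,\cut}(a)-\val_{G,\cut}(a)\right|
=\left|\frac{|\overline{E}|}{|\overline{E}_\varepsilon|}\val_{\Lambda(G)_\varepsilon,\cut}(a)-\val_{\Lambda(G),\cut}(a)\right|\le\varepsilon\big(d_{\Lambda(G)}|\zs_a|+\vol_{\Lambda(G)}(\zs_a)\big),
\]
where the first equality uses $\val_{G,\cut}=\val_{\Lambda(G),\cut}$ (and likewise for the sparsifiers), $|E|=|\overline{E}|$, $|E_\varepsilon|=|\overline{E}_\varepsilon|$; since $G$ and $\Lambda(G)$ have the same vertices, degrees, volumes, and average degree (under the no-duplicates convention), the right-hand side equals $\varepsilon(d_G|\zs_a|+\vol_G(\zs_a))$, which is exactly Definition~\ref{def:additive_cut} for $G$.

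The main obstacle is purely bookkeeping: pinning down the duplicate-hyperedges convention so that $G$ and $\Lambda(G)$ genuinely share the same degree sequence and edge count, and being explicit that constant factors depending on $k$ are swept into the $O(\cdot)$ (including the mild change of $\log\frac{k}{\varepsilon}$ to $\log\frac1\varepsilon$). There is no real mathematical difficulty here — the content is entirely the order-invariance of $\cut$, which makes the undirected and directed values literally equal. I would state the no-duplicates assumption explicitly (or pass to multi-hypergraphs) and then the proof is three lines plus the displayed chain above.
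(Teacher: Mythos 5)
Your approach differs from the paper's: you apply $\Lambda$ directly to $G$, whereas the paper first passes through the $k$-partite $k$-fold cover $\gamma(G)$ and then takes $\Lambda(\gamma(G))$. That extra step is not cosmetic; it is precisely what makes the ``duplicate'' problem you flagged disappear. In $\gamma(G)$ the copies of a vertex are distinguished by their superscript, so no two directed hyperedges of $\gamma(G)$ can ever share an underlying vertex set (if $\{v_1^{(0)},\hdots,v_k^{(k-1)}\}=\{w_1^{(0)},\hdots,w_k^{(k-1)}\}$ then matching superscripts forces $v_i=w_i$ for all $i$). Hence $\Lambda(\gamma(G))$ is a simple undirected hypergraph in exact bijection with $E$, $|E|=|\overline{E^\gamma}|$ holds without caveats, and Theorem~\ref{thm:additive_cut} applies as stated. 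A short computation with the assignment $a'(v^{(i)})=a(v)$ then shows the error term $\varepsilon(d_{\Lambda(\gamma(G))}|\zs_{a'}|+\vol_{\Lambda(\gamma(G))}(\zs_{a'}))$ equals $\varepsilon(d_G|\zs_a|+\vol_G(\zs_a))$.

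Your proposed workarounds for the collapsing issue do not close the gap. (i) ``Merging'' parallel directed hyperedges replaces $G$ by a different hypergraph $G'$ with different $|E|$, degrees, and volumes; a sparsifier of $G'$ is not a subhypergraph of $G$, and the claim to be proved is about $G$. (ii) The ``constant-factor'' argument is incorrect: when collapsing occurs, the central identity $\val_{G,\cut}(a)=\val_{\Lambda(G),\cut}(a)$ fails outright, not merely up to a uniform constant, because different undirected hyperedges may correspond to different numbers of directed preimages, so the ratio is assignment-dependent. (iii) The multi-hypergraph route could in principle work, but it rests on the unargued assertion that Theorem~\ref{thm:additive_cut} holds verbatim for multi-hypergraphs; the paper never establishes or invokes that, and the cover construction makes it unnecessary. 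I would recommend replacing all three workarounds with the single clean observation about $\gamma$: it restores a bijection for free and also mirrors the machinery you will need anyway in Proposition~\ref{prp:even_not_all_one}.
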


\begin{proof}
Let $\varepsilon>0$, and let $\gamma(G) = (V^\gamma,E^\gamma)$ be the $k$-partite $k$-fold cover of $G$. Let $\Lambda(\gamma(G))$ be the undirected equivalent of $\gamma(G)$. Let $\Lambda(\gamma(G))_\varepsilon = (V^\gamma_\varepsilon,\overline{E^\gamma_\varepsilon})$ be a subhypergraph of $\Lambda(\gamma(G))$ promised by Theorem~\ref{thm:additive_cut}. By the construction of the $k$-partite $k$-fold cover, there are no two directed hyperedges over the same set of vertices, and so there is a 1-1 correspondence between the hyperedges of $G$ and the hyperedges of $\Lambda(\gamma(G))$. Hence we have a subhypergraph $G_\varepsilon = (V,E_\varepsilon)$ of $G$ such that $\Lambda(\gamma(G_\varepsilon)) = \Lambda(\gamma(G))_\varepsilon$ (by taking the hyperedges corresponding to the ones of $\Lambda(\gamma(G))_\varepsilon$). We also have $|\overline{E^\gamma}| = |E|$ and $|\overline{E^\gamma_\varepsilon}| = |E_\varepsilon|$.

Let $a:V\to\{0,1\}$. Define $a':V^\gamma\to\{0,1\}$ by $a'(v^{(i)}) = a(v)$. We have
\begin{equation}
\label{eq:directed_g}
\val_{G,\cut}(a) = \val_{\Lambda(\gamma(G)),\cut}(a'),
\end{equation}
which is true for any hypergraph, and in particular for $G_\varepsilon$:
\begin{equation}
\label{eq:directed_ge}
\val_{G_\varepsilon,\cut}(a) = \val_{\Lambda(\gamma(G_\varepsilon)),\cut}(a').
\end{equation}

Applying Theorem~\ref{thm:additive_cut} to $\Lambda(\gamma(G))$ and $a'$ gives us
\begin{align*}
\left|\frac{|E|}{|E_\varepsilon|}\val_{G_\varepsilon,\cut}(a) - \val_{G,\cut}(a)\right| & = \left|\frac{|\overline{E^\gamma}|}{|\overline{E^\gamma_\varepsilon}|}\val_{\Lambda(\gamma(G_\varepsilon)),\cut}(a') - \val_{\Lambda(\gamma(G)),\cut}(a')\right| \\
& \leq \varepsilon (d_{\Lambda(\gamma(G))}\left|\zs_{a'}\right|+\vol_{\Lambda(\gamma(G))}(\zs_{a'})) \\ & = \varepsilon(d_{\gamma(G)}\cdot k|\zs_a|+\vol_{\gamma(G)}(\zs_{a'})) \\
& = \varepsilon(d_G|\zs_a|+\vol_G(\zs_a)),
\end{align*}
where the first line is due to (\ref{eq:directed_g}) and (\ref{eq:directed_ge}), the second line is by Theorem~\ref{thm:additive_cut}, and the last two lines are by properties of the $k$-partite $k$-fold cover. Moreover
  \[|E_\varepsilon| = |\overline{E^\gamma_\varepsilon}| =
  O\left(\frac{kn}{k}\varepsilon^{-2}\log\frac{k}{\varepsilon}\right)=O\left(n\varepsilon^{-2}\log\frac{1}{\varepsilon}\right),\]
as required.
\end{proof}

From now on, whenever we say a ``hypergraph'', we mean a ``directed hypergraph''
with $n$ vertices. By Remark~\ref{rmk:undirected}, the results also apply to
undirected hypergraphs (whenever it makes sense, i.e.\ if the associated predicate
is symmetric). We also omit the word additive when discussing sparsification.
The following notion of sparsification is a natural generalisation of cut
sparsification (Definition~\ref{def:additive_cut}) to arbitrary predicates.

\begin{definition}
\label{def:additive_hypergraphs}
Let $P$ be a $k$-ary Boolean predicate and $G = (V,E)$ a $k$-uniform hypergraph.
  We say that $G$ \emph{admits $P$-sparsification} with error $\varepsilon$
  using $O(f(n,\varepsilon))$ hyperedges if there exists a subhypergraph
  $G_\varepsilon = (V, E_\varepsilon\subseteq E)$ with $|E_\varepsilon| =
  O(f(n,\varepsilon))$, called a \emph{$P$-sparsifier} of $G$, such that for every assignment $a:V\to \{0,1\}$ we have
\begin{equation}
\label{eq:main_sparsify}
\left|\frac{|E|}{|E_\varepsilon|}\val_{G_\varepsilon,P}(a) - \val_{G,P}(a)\right|\leq \varepsilon(d_G|\zs_a| + \vol_G(\zs_a)),
\end{equation}
where $d_G$ is the average degree of $G$.
\end{definition}

The following theorem is our main result, extending Proposition~\ref{prp:directed_cut} to \emph{all} $k$-ary predicates with Boolean domains.

\begin{theorem}[\textbf{Main}]
\label{thm:additive_boolean}
For every $k$-uniform hypergraph $G$ ($k$ is a constant), every $k$-ary Boolean
predicate $P:\{0,1\}^k\to\{0,1\}$, and every $\varepsilon>0$, $G$ admits $P$-sparsification with error $\varepsilon$ using $O\left(n\varepsilon^{-2}\log\frac{1}{\varepsilon}\right)$ hyperedges.
\end{theorem}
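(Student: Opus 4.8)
The plan is to reduce $P$-sparsification of a $k$-uniform hypergraph $G$ to cut-sparsification of the $k$-partite $k$-fold cover $\gamma(G)$, exactly the black-box reduction used in Proposition~\ref{prp:directed_cut}, and then to argue that the \emph{same} subhypergraph that cut-sparsifies $\Lambda(\gamma(G))$ in fact $P$-sparsifies $G$. The key point is that the cut sparsifier of $\Lambda(\gamma(G))$ promised by Theorem~\ref{thm:additive_cut} satisfies the additive guarantee (\ref{eq:cut_additive}) not merely for cut but, we will show, for a whole family of predicates on $\gamma(G)$, and this family is rich enough to recover $P$. So first I would fix $\varepsilon$, form $\gamma(G)$, take $\Lambda(\gamma(G))_\varepsilon$ from Theorem~\ref{thm:additive_cut}, pull it back to $G_\varepsilon$ via the $1$--$1$ correspondence of hyperedges, and note $|E_\varepsilon| = O(n\varepsilon^{-2}\log\frac1\varepsilon)$ as in Proposition~\ref{prp:directed_cut}. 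Given an assignment $a:V\to\{0,1\}$, lift it to $a':V^\gamma\to\{0,1\}$ by $a'(v^{(i)})=a(v)$; then $\val_{G,P}(a)$ should equal $\val_{\gamma(G),P}(a')$ for the induced predicate on the cover, and likewise for $G_\varepsilon$, so it suffices to prove the additive bound for predicates on the (bipartite-like, $k$-partite) cover when restricted to assignments that are constant on each copy-class $\{v^{(0)},\dots,v^{(k-1)}\}$.

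The technical core is a linear-algebraic one, and I expect this to be the main obstacle. The idea is to view each $k$-ary Boolean predicate as a vector in $\mathbb{R}^{2^k}$ indexed by inputs $b\in\{0,1\}^k$, and to consider the linear span of predicates whose additive sparsifiability (on covers, against copy-constant assignments) can be established directly — the natural candidates are (shifted/affine combinations of) the $\cut$ predicate on subsets of the $k$ coordinates, or rather predicates of the form ``not all coordinates in a block are equal.'' One shows (i) $\cut$ itself is sparsifiable by Theorem~\ref{thm:additive_cut}; (ii) sparsifiability is preserved under taking linear combinations of predicates, because the additive error in (\ref{eq:main_sparsify}) is itself additive/homogeneous in $P$ and the \emph{same} $G_\varepsilon$ works simultaneously for all predicates in play; and (iii) the constant predicate $P\equiv 1$ is trivially sparsifiable (its value is $|E|$, resp.\ $|E_\varepsilon|$, and $\frac{|E|}{|E_\varepsilon|}|E_\varepsilon|-|E|=0$). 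Then the whole game is to show that the target vector $v_P$ lies in the span of these special vectors. By the structure of the excerpt, a handful of easy reductions peel off cases — using duality/complementation one may reduce to $P(1,\dots,1)=0$, and a parity argument separates $k$ odd from $k$ even — leaving the hard case ``$k$ even, $P(1,\dots,1)=0$'', which is precisely Proposition~\ref{prp:even_not_all_one} cited as the main technical contribution. I would therefore invoke Proposition~\ref{prp:even_not_all_one} for that case and dispatch the remaining cases by the elementary reductions.

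Concretely, after the reduction I would organise the proof as: (1) reduce to directed hypergraphs and to the cover, as above; (2) observe that if $v_{P_1}$ and $v_{P_2}$ are sparsifiable by a common $G_\varepsilon$ then so is any $\alpha v_{P_1}+\beta v_{P_2}$, with error scaling by $|\alpha|+|\beta|$ (absorbing constants into the $O(\cdot)$, which is allowed since $k$ is constant); (3) handle $P\equiv 0$ and $P\equiv 1$ trivially; (4) if $P(1,\dots,1)=1$, replace $P$ by $1-P$ (sparsifiable iff $P$ is, by step (2) and step (3)), so assume $P(1,\dots,1)=0$; (5) if $k$ is odd, show $v_P$ is in the span of $\cut$-type vectors by a short argument (parity / inclusion–exclusion over coordinate blocks), and if $k$ is even, invoke Proposition~\ref{prp:even_not_all_one}; (6) translate the sparsifiability of $v_P$ on $\gamma(G)$ back to $G$ via the equalities $\val_{G,P}(a)=\val_{\gamma(G),P}(a')$ and the volume/degree identities $\vol_{\gamma(G)}(\zs_{a'})=\vol_G(\zs_a)$, $d_{\gamma(G)}\cdot k|\zs_a| = d_G|\zs_a|$ already used in Proposition~\ref{prp:directed_cut}, concluding (\ref{eq:main_sparsify}). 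The one genuinely hard ingredient is the spanning statement in the even case; everything else is bookkeeping with the error term and the cover construction.
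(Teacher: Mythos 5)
Your proposal follows the paper's structure closely: reduce via the $k$-partite $k$-fold cover, use the closure of sparsifiability under linear combinations (the paper's Lemma~\ref{lem:sparse_by_linear_combination}), use complementation $P\mapsto 1-P$ to reduce to $P(1,\dots,1)=0$ (the paper's Proposition~\ref{prop:all_even_hypegraphs}), and invoke Proposition~\ref{prp:even_not_all_one} for the even case. These pieces match the paper and are all correct as stated.

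The gap is in your step (5) for odd $k$. You claim a ``short argument (parity / inclusion--exclusion over coordinate blocks)'' showing that $v_P$ lies in the span of the cut-type vectors $u_T$. This cannot work: for odd $k$ the $u_T$ satisfy a nontrivial linear dependence, so they do not span the $(2^k-1)$-dimensional subspace $\{x : x[2^k-1]=0\}$ that is needed. Concretely, by inclusion--exclusion one checks that for every coordinate $j$ with $\zeros(j)$ a nonempty proper subset the alternating sum $\sum_{\emptyset\ne T\subseteq [k]}(-1)^{|T|+1}u_T[j]$ vanishes, and at the coordinate $j=0$ it evaluates to $1-(-1)^{k+1}$, which is $2$ when $k$ is even but $0$ when $k$ is odd. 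So for even $k$ the alternating sum produces $2e_0$ (a useful generator), whereas for odd $k$ it collapses to the zero vector, exhibiting a genuine linear dependence. (The degenerate case $k=1$ is an extreme illustration: all $u_T$ are identically zero.) Since the $u_T$ arise exactly from cut assignments on the cover, no choice of coefficients will recover an arbitrary $e_r$ in the odd case, and the claim of step (5) is false. The paper handles odd $k$ differently (Proposition~\ref{prp:all_odd_hypergraphs}): it adjoins a single universal vertex to $G$ to form a $(k{+}1)$-uniform hypergraph, extends $P$ to a $(k{+}1)$-ary predicate $P'$ by conjoining the condition $b_{k+1}=1$, applies the even-arity result there, and pulls the sparsifier back. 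Replacing your step (5) for odd $k$ with this universal-vertex reduction closes the gap; the rest of your argument is sound.
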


Theorem~\ref{thm:additive_boolean} can be informally restated as ``\emph{every $k$-uniform hypergraph is sparsifiable with respect to all $k$-ary Boolean predicates}'' or ``\emph{for every Boolean predicate $P$ of constant arity, $\CSPP$ is sparsifiable}''.

\begin{remark}
  \label{rem:gen}
It is possible to consider an even more general case where each hyperedge in 
$G$ has its own predicate. In this case, we can apply
  Theorem~\ref{thm:additive_boolean} to each of the hypergraphs obtained by
  taking only hyperedges corresponding to a specific predicate, and so get a
  sparsifier for each such predicate. Taking the union of all their hyperedges,
  we get a new hypergraph $G_\varepsilon$, which is a sparsifier of the original
  hypergraph. Indeed, it has
  $O\left(n\varepsilon^{-2}\log\frac{1}{\varepsilon}\right)$ hyperedges since
  it is the union of a constant number of hypergraphs. (The number of
  predicates $P:\{0,1\}^k\to\{0,1\}$ is constant, since $k$ is constant.) It
  also satisfies (\ref{eq:main_sparsify}) for any given assignment up to some
  constant factor, since all the sparsifiers it is composed of do. This constant
  factor can be eliminated by choosing $\varepsilon_0 = \frac{\varepsilon}{m}$
  for an appropriate $m$ that depends only on $k$.
\end{remark}

The main work in the proof of Theorem~\ref{thm:additive_boolean} is for even
values of $k$; a simple reduction (Proposition~\ref{prp:all_odd_hypergraphs}) then reduces the case of
$k$ odd to the even case.

In order to prove Theorem~\ref{thm:additive_boolean} for even $k$, we use the
$k$-partite $k$-fold cover of $G$ and apply Proposition~\ref{prp:directed_cut}
to various assignments of it. For a $k$-ary Boolean predicate $P:\{0,1\}^k\to
\{0,1\}$, we consider the vector $v_P\in \mathbb{R}^{2^k}$, defined by $v_P[i] =
P(\bin_k(i))$.
For instance, for the $\cut$ predicate on a 3-uniform hypergraph, we have $v_\cut = (0,1,1,1,1,1,1,0)$.

For a given hypergraph $G$ and an assignment $a$, we consider the vector
$v_{G,a}\in \mathbb{R}^{2^k}$ defined by $v_{G,a}[i] =
\left|\{(v_1,\hdots,v_k)\in E: (a(v_1),\hdots,a(v_k)) = \bin_k(i)\}\right|$.
In other words, each coordinate of $v_{G,a}$ counts the hyperedges in $G$ whose vertices are assigned some specific set of values by $a$. 

\begin{example}
\label{example}
  Given the graph $G=(V,E)$ in Figure~\ref{fig:simple_graph} (so $k=2$) and the assignment
  $a:V\to \{0,1\}$ defined as $a(v_1)=a(v_2)=a(v_3)=0$ and
  $a(v_4)=a(v_5)=a(v_6)=a(v_7)=1$, we have $v_{G,a} = (2,3,1,5)$, since there are two edges with assignment $(0,0)$, namely $(v_1,v_2)$ and
  $(v_2,v_3)$, three edges with assignment $(0,1)$, namely $(v_1,v_4)$,
  $(v_2,v_6)$, and $(v_2,v_7)$, etc.
\end{example}

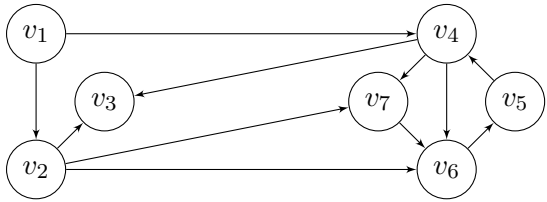
\begin{figure}[H]
\centering
\begin{tikzpicture}
  [scale=.9,auto=center,every node/.style={draw, circle}] 
  \tikzset{vertex/.style = {shape=circle,draw,minimum size=1.5em}}
  \tikzset{edge/.style = {->,> = latex'}}
    
  \node[vertex] (a1) at (0,2) {$v_1$};  
  \node[vertex] (a2) at (0,0) {$v_2$};  
  \node[vertex] (a3) at (1,1)  {$v_3$};
  \node[vertex] (a4) at (6,2) {$v_4$};  
  \node[vertex] (a5) at (7,1)  {$v_5$};
  \node[vertex] (a6) at (6,0) {$v_6$};  
  \node[vertex] (a7) at (5,1)  {$v_7$};
  
  \draw[edge] (a1) to (a2);
  \draw[edge] (a2) to (a3);  
  \draw[edge] (a1) to (a4);
  \draw[edge] (a4) to (a6);
  \draw[edge] (a4) to (a7);
  \draw[edge] (a5) to (a4);
  \draw[edge] (a4) to (a3);
  \draw[edge] (a2) to (a6);
  \draw[edge] (a2) to (a7);
  \draw[edge] (a6) to (a5);
  \draw[edge] (a7) to (a6);
\end{tikzpicture}
\caption{Graph from Example~\ref{example}.}
\label{fig:simple_graph}
\end{figure}

Under these notations, we get $\val_{G,P}(a) = \langle v_P, v_{G,a}\rangle$,
where $\langle\cdot,\cdot\rangle$ is the standard inner product in $\mathbb{R}^{2^k}$.
We begin by proving the following useful lemma.

\begin{lemma}
\label{lem:sparse_by_linear_combination}
Let $G = (V,E)$ be a $k$-uniform hypergraph, $P_1,\hdots,P_m$ be $k$-ary Boolean predicates ($m$ is a constant). Suppose that for every $\varepsilon>0$ and $1\leq i\leq m$, $G$ admits $P_i$-sparsification with error $\varepsilon$ using $O\left(n\varepsilon^{-2}\log\frac{1}{\varepsilon}\right)$ hyperedges, and that the same subhypergraph $G_\varepsilon = (V,E_\varepsilon\subseteq E)$ is a $P_i$-sparsifier for all $P_i$. Suppose that $P$ is some $k$-ary Boolean predicate for which we have
$v_P = \sum_{i=1}^m\lambda_iv_{P_i}$ for some constants $\lambda_1,\hdots,\lambda_m\in\mathbb{R}$.
Under these conditions, $G$ admits $P$-sparsification with error $\varepsilon$ using $O\left(n\varepsilon^{-2}\log\frac{1}{\varepsilon}\right)$ hyperedges.
\end{lemma}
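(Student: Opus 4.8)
The plan is to use the same subhypergraph $G_\varepsilon$ that simultaneously sparsifies all the $P_i$, and to exploit the linearity of the map $a \mapsto \val_{G,P}(a) = \langle v_P, v_{G,a}\rangle$ in the first argument. Concretely, since $v_P = \sum_{i=1}^m \lambda_i v_{P_i}$, for any assignment $a:V\to\{0,1\}$ we have
\[
\val_{G,P}(a) = \langle v_P, v_{G,a}\rangle = \sum_{i=1}^m \lambda_i \langle v_{P_i}, v_{G,a}\rangle = \sum_{i=1}^m \lambda_i \val_{G,P_i}(a),
\]
and likewise $\val_{G_\varepsilon,P}(a) = \sum_{i=1}^m \lambda_i \val_{G_\varepsilon,P_i}(a)$, because $v_{G_\varepsilon,a}$ is the analogous counting vector for the subhypergraph and the identity $\val_{H,P}(a) = \langle v_P, v_{H,a}\rangle$ holds for every hypergraph $H$.

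Next I would form the error quantity for $P$ and split it using the triangle inequality. Writing $R = |E|/|E_\varepsilon|$ for the (common) rescaling factor, we get
\[
\left| R\,\val_{G_\varepsilon,P}(a) - \val_{G,P}(a)\right|
= \left| \sum_{i=1}^m \lambda_i\left( R\,\val_{G_\varepsilon,P_i}(a) - \val_{G,P_i}(a)\right)\right|
\leq \sum_{i=1}^m |\lambda_i|\,\left| R\,\val_{G_\varepsilon,P_i}(a) - \val_{G,P_i}(a)\right|.
\]
By hypothesis, each summand is at most $\varepsilon(d_G|\zs_a| + \vol_G(\zs_a))$ when $G_\varepsilon$ is the $P_i$-sparsifier with error $\varepsilon$; hence the whole sum is at most $\left(\sum_{i=1}^m |\lambda_i|\right)\varepsilon(d_G|\zs_a| + \vol_G(\zs_a))$. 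Since $m$ and the $\lambda_i$ are constants (depending only on $P$ and the $P_i$, not on $G$ or $a$), the factor $\Lambda := \sum_{i=1}^m|\lambda_i|$ is a constant.

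To absorb the constant $\Lambda$, I would invoke the $P_i$-sparsification hypotheses with the smaller error $\varepsilon_0 = \varepsilon/\Lambda$ instead of $\varepsilon$ (assuming $\Lambda > 0$; the case $\Lambda = 0$ forces $v_P = 0$, i.e. $P$ identically $0$, which is trivially sparsifiable). This yields a single subhypergraph $G_{\varepsilon_0}$ with $|E_{\varepsilon_0}| = O\!\left(n\varepsilon_0^{-2}\log\frac{1}{\varepsilon_0}\right) = O\!\left(n\varepsilon^{-2}\log\frac{1}{\varepsilon}\right)$ hyperedges (the constant $\Lambda$ is swallowed by the $O(\cdot)$), and the computation above gives $\left| R\,\val_{G_{\varepsilon_0},P}(a) - \val_{G,P}(a)\right| \leq \Lambda \varepsilon_0 (d_G|\zs_a| + \vol_G(\zs_a)) = \varepsilon(d_G|\zs_a| + \vol_G(\zs_a))$ for every $a$, which is exactly $P$-sparsification with error $\varepsilon$. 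There is no real obstacle here: the argument is a routine linearity-plus-triangle-inequality manipulation, and the only mild subtlety is the bookkeeping of the rescaling factor $R = |E|/|E_\varepsilon|$ — which must be the \emph{same} for all $P_i$, and indeed it is, precisely because we are told a \emph{single} $G_\varepsilon$ serves as the $P_i$-sparsifier for all $i$ — together with making sure the $\log\frac{1}{\varepsilon_0}$ factor stays $O(\log\frac{1}{\varepsilon})$, which holds since $\varepsilon_0 = \varepsilon/\Lambda$ with $\Lambda$ constant.
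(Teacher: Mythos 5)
Your proof is correct and follows essentially the same route as the paper: express $\val_{\cdot,P}$ as a linear combination of the $\val_{\cdot,P_i}$ via the inner-product identity, apply the triangle inequality, and rescale $\varepsilon$ by a constant to absorb $\sum_i|\lambda_i|$. The only cosmetic difference is your choice $\varepsilon_0 = \varepsilon/\Lambda$ versus the paper's $\varepsilon_0 = \min_i \varepsilon/(m|\lambda_i|)$; both achieve the same effect and the same asymptotic edge bound.
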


\begin{proof}
Let $\varepsilon>0$ and denote $\varepsilon_i = \frac{\varepsilon}{m|\lambda_i|}$ (if $\lambda_i=0$ take $\varepsilon_i=1$ instead) and $\varepsilon_0 = \min\{\varepsilon_1,\hdots,\varepsilon_m\}$. Let $G_{\varepsilon_0} = (V,E_{\varepsilon_0})$ be the common witness subhypergraph for $\varepsilon_0$ promised by the assumption. We know that every $P_i$ satisfies
\begin{equation}
\label{eq:each_pi}
\left|\frac{|E|}{|E_{\varepsilon_0}|}\val_{G_{\varepsilon_0},P_i}(a) - \val_{G,P_i}(a)\right|\leq \varepsilon_0(d_G|\zs_a| + \vol_G(\zs_a))
\end{equation}
for every assignment $a:V\to \{0,1\}$.
We also have 
\[\val_{G,P}(a) = \langle v_P, v_{G,a}\rangle = \left\langle
  \sum_{i=1}^m\lambda_iv_{P_i}, v_{G,a}\right\rangle =
  \sum_{i=1}^m\lambda_i\langle v_{P_i}, v_{G,a}\rangle = \sum_{i=1}^m
  \lambda_i\val_{G,P_i}(a),\]
and similarly
  \[\val_{G_{\varepsilon_0},P}(a) = \sum_{i=1}^m
  \lambda_i\val_{G_{\varepsilon_0},P_i}(a).\]
Therefore, for every assignment $a$ we get
\begin{align*}
\left|\frac{|E|}{|E_{\varepsilon_0}|}\val_{G_{\varepsilon_0},P}(a) - \val_{G,P}(a)\right| & =  \left|\frac{|E|}{|E_{\varepsilon_0}|}\sum_{i=1}^m\lambda_i\val_{G_{\varepsilon_0},P_i}(a) - \sum_{i=1}^m\lambda_i\val_{G,P_i}(a)\right| \\
& \leq \sum_{i=1}^m|\lambda_i|\left|\frac{|E|}{|E_{\varepsilon_0}|}\val_{G_{\varepsilon_0},P_i}(a) - \val_{G,P_i}(a)\right| \\
& \leq \sum_{i=1}^m|\lambda_i|\varepsilon_0(d_G|\zs_a| + \vol_G(\zs_a)) \\
& \leq  \varepsilon(d_G|\zs_a| + \vol_G(\zs_a)),
\end{align*}
where the second line is due to the triangle inequality, the third is due to (\ref{eq:each_pi}) and the fourth is by the definition of $\varepsilon_0$.

Furthermore, since $m$ and all $\lambda_i$ are constants,
  \[|E_{\varepsilon_0}| =
  O\left(n\varepsilon_0^{-2}\log\frac{1}{\varepsilon_0}\right) =
  O\left(n\varepsilon^{-2}\log\frac{1}{\varepsilon}\right),\]
and so $G_{\varepsilon_0}$ is a witness for the $P$-sparsification of $G$.
\end{proof}

The core of the proof of Theorem~\ref{thm:additive_boolean} is in the next proposition, which establishes the result for Boolean predicates on even uniformity hypergraphs, with a small restriction.

\begin{proposition}
\label{prp:even_not_all_one}
Let $k$ be an even number and $G$ be a $k$-uniform hypergraph. Let $P$ be a $k$-ary Boolean predicate with $P(1,1,\hdots,1) = 0$. Then for every $\varepsilon>0$, $G$ admits $P$-sparsification with error $\varepsilon$ using $O\left(n\varepsilon^{-2}\log\frac{1}{\varepsilon}\right)$ hyperedges.
\end{proposition}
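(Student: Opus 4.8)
The plan is to use Lemma~\ref{lem:sparse_by_linear_combination}: I will exhibit a family of ``simple'' predicates $Q_1,\ldots,Q_m$ (with $m$ depending only on $k$) that are each directly $P$-sparsifiable via Proposition~\ref{prp:directed_cut}, using a common subhypergraph, and then show that the vector $v_P$ lies in their real linear span whenever $P(1,\ldots,1)=0$. The natural candidates for the simple predicates come from the cut predicate applied to the $k$-partite $k$-fold cover: given an assignment $a:V\to\{0,1\}$, feeding a related assignment $a'$ into $\val_{\Lambda(\gamma(G)),\cut}$ counts, in each orbit, how the $k$ coordinates split between $0$ and $1$. Concretely, for a subset $S\subseteq[k]$ one gets a predicate $Q_S$ that is constant on each ``type'' $\bin_k(i)$ according to $\zeros_k(i)$, and the key point is that such a $Q_S$ is sparsifiable directly because applying $\cut$ to $\gamma(G)$ under the assignment that sends $v^{(j)}$ to $a(v)$ for $j\in S$ and to $\overline{a(v)}$ (or a fixed constant) for $j\notin S$ reproduces exactly $\val_{G,Q_S}(a)$, while the error term and hyperedge count from Proposition~\ref{prp:directed_cut} transfer through the cover as in the proof of that proposition. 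Crucially, all these assignments of $\gamma(G)$ are assignments of the \emph{same} cover of the \emph{same} graph, so Proposition~\ref{prp:directed_cut} (applied once to $\Lambda(\gamma(G))$) gives a single common sparsifier subhypergraph, which descends to a common $G_\varepsilon$ and lets us invoke Lemma~\ref{lem:sparse_by_linear_combination}.

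First I would set up the linear algebra carefully. Work in $\mathbb{R}^{2^k}$ with coordinates indexed by $i\in\{0,\ldots,2^k-1\}$, and group them by weight: the value of a symmetric-in-coordinates predicate depends only on $|\zeros_k(i)|$. I expect the directly-sparsifiable vectors to be, up to the constant-$1$ and constant-$0$ vectors, something like the ``weight-indicator'' vectors $e_j$ ($j=0,\ldots,k$) where $e_j[i]=1$ iff $|\zeros_k(i)|=j$, or equivalently the vectors obtained from $\cut$ of $\gamma(G)$ restricted to coordinate subsets — one needs to check which weight-classes the cover construction can isolate. The target $v_P$ need not be symmetric, so a single family of symmetric vectors cannot span it; instead one uses, for each coordinate subset $S\subseteq[k]$, the analogue of $e_j$ living on the sublattice determined by $S$, i.e. vectors recording ``the coordinates in $S$ have a prescribed number of zeros.'' The span of all of these, together with the all-ones vector, is what we must show contains $v_P$ exactly when $v_P[2^k-1]=0$ (the all-ones coordinate corresponds to $P(1,\ldots,1)$). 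This is a clean statement about a spanning set in $\mathbb{R}^{2^k}$, and the constraint $P(1,\ldots,1)=0$ should appear because the all-ones coordinate is the one place where the $\cut$-of-cover construction is forced to record a $1$ (every $\gamma$-hyperedge is non-monochromatic under a non-constant restriction) — or, dually, the "missing" dimension is exactly the span-orthogonal complement, a single linear functional that evaluates $v_P$ at $i=2^k-1$.

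The main obstacle, as the introduction itself flags, is precisely this last step: proving that $v_P$ is generated by the special vectors. The tempting shortcut — showing the special vectors form a basis of the hyperplane $\{v : v[2^k-1]=0\}$ — seems not to work cleanly (there are "too many" subset-indexed vectors, so they are dependent, and counting dimensions is delicate), so I would instead argue generation directly: show that each standard basis vector $\mathbf{1}_{i=\ell}$ for $\ell\ne 2^k-1$ lies in the span, by an inclusion–exclusion / Möbius argument over the subset lattice (write $\mathbf{1}_{i=\ell}$ as a signed sum of the "prescribed number of zeros on $S$" vectors), being careful that the inclusion–exclusion never needs the forbidden all-ones coordinate. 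Once every such basis vector is in the span, any $v_P$ with $v_P[2^k-1]=0$ is a linear combination of them, Lemma~\ref{lem:sparse_by_linear_combination} finishes the argument, and the hyperedge bound $O(n\varepsilon^{-2}\log\frac{1}{\varepsilon})$ is inherited from Proposition~\ref{prp:directed_cut} since $m=m(k)$ and all coefficients are constants. I would also double-check that evenness of $k$ is genuinely used — I suspect it enters either in ensuring the relevant $\cut$-of-cover assignments produce the right weight-classes (parity of $k$ affects which complemented restrictions coincide) or is only needed to rule out a parity obstruction that the odd case handles separately via Proposition~\ref{prp:all_odd_hypergraphs}; pinning down exactly where parity is essential is the subtle part of writing this up.
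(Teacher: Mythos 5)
Your high-level plan is the same as the paper's: pass to the $k$-partite $k$-fold cover, apply Proposition~\ref{prp:directed_cut} once to get a single common sparsifier, pull that back to a common $G_\varepsilon$, obtain a family of directly sparsifiable predicates from the cut values of various assignments of $\gamma(G)$, and then use Lemma~\ref{lem:sparse_by_linear_combination} after showing $v_P$ lies in the span of that family. You also correctly anticipate that the span argument should be done coordinate-wise, showing $e_\ell$ is in the span for every $\ell\neq 2^k-1$. So the architecture is right.

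The gap is that the two technical pieces — which family of vectors actually comes out of the cut evaluations, and why $v_P$ is in their span — are not pinned down, and your sketch of the family is not the one that works. You waver between (i) assignments that set $v^{(j)}\mapsto \overline{a(v)}$ for $j\notin S$, (ii) assignments that set $v^{(j)}$ to a fixed constant for $j\notin S$, (iii) symmetric weight-indicator vectors, and (iv) vectors recording ``exactly a prescribed number of zeros among the coordinates in $S$.'' These are genuinely different families, and (iii) and (iv) do not arise from a single $\cut$-evaluation of $\gamma(G)$: $\cut$ is a binary ``monochromatic or not'' test, so one assignment can only isolate an indicator of the form ``at least one difference along the selected coordinates,'' not an exact weight. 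The paper's actual family, produced by the assignment $a_T(v^{(i)})=0$ iff ($i\in T$ and $a(v)=0$) and $=1$ otherwise, is the set of vectors $u_T$ with $u_T[j]=1$ iff $\emptyset\neq T\cap\zeros_k(j)\neq[k]$ — i.e.\ ``at least one zero in $T$, excluding the all-zero string when $T=[k]$.'' Your $\overline{a(v)}$ variant produces a different family whose span you would have to analyse from scratch, and your ``exactly $j$ zeros in $S$'' vectors are not available.

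Consequently the span argument — which the paper itself flags as the main technical contribution — is left as ``an inclusion–exclusion / Möbius argument'' without being carried out. In the paper this takes the form of explicit coefficients $\lambda_{r,m}=\tfrac{1}{2}(-1)^{\mathrm{Ham}(r\oplus m)+(1-\mathds{1}_{r\&m})}$ together with a delicate case analysis pairing up indices that differ in a single bit; it is not a standard M\"obius inversion over the Boolean lattice, and it is exactly here that evenness of $k$ enters (e.g.\ in ensuring $\lambda_{r,0}$ and $\lambda_{r,2^k-1}$, attached to $T=[k]$ and $T=\emptyset$, have the same sign, and that $r$ and its bitwise negation have the same Hamming-weight parity). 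You correctly suspect parity matters but do not locate it. So while the route is the same, the proof is missing its core: a concrete identification of the cut-produced vectors and a verified spanning identity.
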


\begin{proof}
Let $\varepsilon>0$. We consider $\gamma(G)$, the $k$-partite $k$-fold cover of $G$. Let $\gamma(G)_\varepsilon$ be a subhypergraph of $\gamma(G)$ promised by Proposition~\ref{prp:directed_cut}, and $G_\varepsilon = (V,E_\varepsilon)$ the corresponding subhypergraph of $G$, i.e.\ the subhypergraph which satisfies $\gamma(G_\varepsilon) = \gamma(G)_\varepsilon$ (by taking the hyperedges corresponding to the ones of $\gamma(G)_\varepsilon$).

Let $a:V\to\{0,1\}$. For every subset $T\subseteq [k]$, we look at the
  assignment $a_T:V^\gamma\to\{0,1\}$ defined by $a_T(v^{(i)})=0$ if $i\in T$
  and  $a(v)=0$, and $a_T(v^{(i)})=1$ otherwise.
We therefore have
\begin{equation}
\label{eq:cover_sparsify}
\left|\frac{|E^\gamma|}{|E^\gamma_\varepsilon|}\val_{\gamma(G)_\varepsilon, \cut}(a_T) - \val_{\gamma(G), \cut}(a_T)\right| \leq \varepsilon(d_{\gamma(G)}|\zs_{a_T}| + \vol_{\gamma(G)}(\zs_{a_T})).
\end{equation}

Define the vector $u_T\in \mathbb{R}^{2^k}$ as follows:
  \[u_T[j] = \left\{\begin{array}{lr}
1&\qquad T\cap \zeros(j)\ne\emptyset,[k] \\ 
0&\qquad otherwise
  \end{array}\right..\]
In other words, the vector $u_T$ is 1 in index $j$ if and only if there exists an index $i\in T$ in which the binary representation of $j$ has a zero, with the exception of $u_{[k]}[0] = 0$. Denote by $P_T$ the predicate corresponding to $u_T$, that is $P_T(\bin_k(j)) = 1 \iff u_T[j] = 1$. Observe that
  \[\val_{\gamma(G),\cut}(a_T) = \val_{G,P_T}(a),\]
since they both count exactly hyperedges $(v_1,\hdots ,v_k)$ which have some
  vertex $v_i$ with $a(v_i) = 0$ with $i\in T$, but if $T = [k]$ then they do
  not count hyperedges which have $a(v_i) = 0$ for all $i=1,\hdots,k$ (see
  example in Figure~\ref{fig:bipartite_assignment}). The same is true for any
  hypergraph, and in particular for $G_\varepsilon$, that is
  \[\val_{\gamma(G_\varepsilon),\cut}(a_T) = \val_{G_\varepsilon,P_T}(a).\]
Putting these results in (\ref{eq:cover_sparsify}), we get
\begin{align*}
\left|\frac{|E|}{|E_\varepsilon|}\val_{G_\varepsilon,P_T}(a) - \val_{G,P_T}(a)\right| & \leq\varepsilon(d_{\gamma(G)}|\zs_{a_T}| + \vol_{\gamma(G)}(\zs_{a_T})) \\
& \leq\varepsilon(d_G|\zs_a| + \vol_G(\zs_a)),
\end{align*}
so $G$ admits $P_T$ sparsification with error $\varepsilon$ using
  $O\left(n\varepsilon^{-2}\log\frac{1}{\varepsilon}\right)$ hyperedges for
  every $T\subseteq [k]$, and for every $\varepsilon$ the sparsification is
  witnessed by the same subhypergraph $G_\varepsilon$. (Notice that
  Proposition~\ref{prp:directed_cut}, when applied to $\gamma(G)$ which has $kn$
  vertices, gives us a subhypergraph with
  $O\left(kn\varepsilon^{-2}\log\frac{1}{\varepsilon}\right)$ hyperedges, and
  recall that $k$ is a constant.)

\begin{figure}
\centering
\begin{tikzpicture}  
  [scale=.85,auto=center,every node/.style={draw, circle, minimum size=13mm}]  
  \node[fill=lightgray] (a1) at (1,2) {$\zs_a^{(0)}$};  
  \node (a2) at (1,0)  {$\overline{\zs}_a^{(0)}$};
  \node (a3) at (3,2)  {$\zs_a^{(1)}$};
  \node (a4) at (3,0)  {$\overline{\zs}_a^{(1)}$};
  \node (a5) at (5,2)  {$\zs_a^{(2)}$};
  \node (a6) at (5,0)  {$\overline{\zs}_a^{(2)}$};
  \node[fill=lightgray] (a7) at (7,2)  {$\zs_a^{(3)}$};
  \node (a8) at (7,0)  {$\overline{\zs}_a^{(3)}$};
  \node (a9) at (9,2)  {$\zs_a^{(4)}$};
  \node (a10) at (9,0)  {$\overline{\zs}_a^{(4)}$};
  \node[draw=none] (a11) at (11,2)  {};
  \node[draw=none] (a12) at (11,0)  {};
  \node[draw=none] (a0) at (12,1)  {\Huge{$\hdots$}};
  \node[draw=none] (a13) at (13,2)  {};
  \node[draw=none] (a14) at (13,0)  {};
  \node (a15) at (15,2)  {$\zs_a^{(k-2)}$};
  \node (a16) at (15,0)  {$\overline{\zs}_a^{(k-2)}$};
  \node[fill=lightgray] (a17) at (17.5,2)  {$\zs_a^{(k-1)}$};
  \node (a18) at (17.5,0)  {$\overline{\zs}_a^{(k-1)}$};

  \draw (a1) -- (a3);
  \draw (a1) -- (a4);
  \draw[color=red,dashed,line width=0.5mm] (a2) -- (a3);
  \draw[color=green,dotted,line width=0.5mm] (a2) -- (a4);
  \draw (a3) -- (a5);
  \draw[color=red,dashed,line width=0.5mm] (a3) -- (a6);
  \draw[color=green,dotted,line width=0.5mm] (a4) -- (a5);
  \draw (a4) -- (a6);
  \draw[color=green,dotted,line width=0.5mm] (a5) -- (a7);
  \draw (a5) -- (a8);
  \draw (a6) -- (a7);
  \draw[color=red,dashed,line width=0.5mm] (a6) -- (a8);
  \draw (a7) -- (a9);
  \draw[color=green,dotted,line width=0.5mm] (a7) -- (a10);
  \draw[color=red,dashed,line width=0.5mm] (a8) -- (a9);
  \draw (a8) -- (a10);
  \draw[color=red,dashed,line width=0.5mm] (a9) -- (a11);
  \draw (a9) -- (a12);
  \draw (a10) -- (a11);
  \draw[color=green,dotted,line width=0.5mm] (a10) -- (a12);
  \draw (a13) -- (a15);
  \draw[color=red,dashed,line width=0.5mm] (a13) -- (a16);
  \draw[color=green,dotted,line width=0.5mm] (a14) -- (a15);
  \draw (a14) -- (a16);
    \draw (a15) -- (a17);
  \draw[color=green,dotted,line width=0.5mm] (a15) -- (a18);
  \draw (a16) -- (a17);
  \draw[color=red,dashed,line width=0.5mm] (a16) -- (a18);
\end{tikzpicture}  

\caption{An example of a representation of an assignment on $\gamma(G)$.
  $\zs_a^{(i)}$ consists of all vertices in $V^{(i)}$ which are a copy of a
  vertex $v\in V$ with $a(v) = 0$, and $\overline{\zs}_a^{(i)}$ consists of the
  rest of $V^{(i)}$. Each hyperedge has a unique path from left to right (but a
  path might belong to multiple hyperedges), choosing one of
  $\zs_a^{(i)},\overline{\zs}_a^{(i)}$ for each $i$. Each such path is also in
  1-1 correspondence with a coordinate in $u_T$. In this example $T =
  \{0,3,k-1\}$ and the shaded sets represent $a_T^{-1}(0)$. By green dotted
  lines we indicated a path corresponding to a hyperedge counted in
  $\val_{\gamma(G),\cut}(a_T)$, and by red dashed lines we indicated a path
  which does not. The green dotted path corresponds to a value of $1$ in the
  coordinate of $u_T$ with binary representation $(1,1,0,0,1,\hdots,0,1)$, and
  the red dashed path to a value $0$ in the coordinate with binary
  representation $(1,0,1,1,0,,\hdots,1,1)$. Note that if $T = [k]$ then any
  hyperedge corresponding to a path only on $\zs_a^{(i)}$ is not counted.}
\label{fig:bipartite_assignment}
\end{figure}
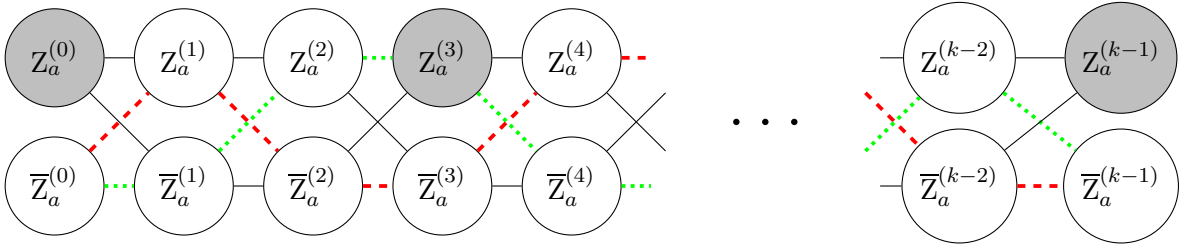

Our next goal is to show that the vector $v_P$ is a linear combination of the vectors $u_T$ for all $T\in[k]$. To show that, we show that every vector $e_r$ in the standard basis of $\mathbb{R}^{2^k}$, with $r\ne 2^k-1$, is a linear combination of these vectors. This is sufficient since the last coordinate of $v_P$ is 0 by the assumption. First we need to order the various sets $T$. We order them in the following decreasing lexicographic order $T_0,T_1,\hdots ,T_{2^k-1}$, where $T_j = \zeros(j)$, so $T_0 = [k], T_1 = [k]\setminus\{k-1\}, T_2 = [k]\setminus\{k-2\}, T_3 = [k]\setminus\{k-1,k-2\}, T_4 = [k]\setminus\{k-3\}$ and so on, until $T_{2^k-1} = \emptyset$.

Let $e_r$ be a vector in the standard basis of $\mathbb{R}^{2^k}$. We introduce the following coefficients for $0\leq m\leq 2^k-1$:
\[\lambda_{r,m} = \frac{1}{2}(-1)^{\text{Ham}(r\oplus m) + (1-\mathds{1}_{r\&m})},\]
where $\oplus,\&$ are the Xor and And binary functions respectively,\footnote{The Xor of two integers is defined as the bitwise Boolean Xor of their binary representations, where the Boolean Xor of two bits is their sum modulo 2. The And of two integers is defined the same way with the Boolean And function which is defined as And$(i,j) = 1 \iff i=j=1$.} Ham is the Hamming weight function, and $\mathds{1}_d$ returns 1 if $d\ne 0$ and 0 if $d=0$. Denote 
\[f_1(m) = \text{Ham}(r\oplus m)\quad,\quad f_2(m) = (1-\mathds{1}_{r\&m}).\]
We shall prove that 
\begin{equation}
\label{eq:lin_comb}
  e_r = \sum_{m=0}^{2^k-1}\lambda_{r,m}u_{T_m}.
\end{equation}

We start with a claim.

\noindent\textbf{Claim}: The sum of all coefficients is 0; i.e., $\sum_{m=0}^{2^k-1}\lambda_{r,m}=0$.

\emph{Proof of the claim}.\quad
  Let $b_1b_2\hdots b_k$ be the binary representation of $r$. Since $r<2^k-1$,
  there exists some $1\leq i\leq r$ for which $b_i = 0$. We can partition the
  coefficients into pairs, such that $\lambda_{r,m_1},\lambda_{r,m_2}$ is a pair if
  and only if $m_1,m_2$ differ in the $i$-th coordinate only. This is clearly a
  partition. For each pair, $f_1$ gives $m_1,m_2$ different parity values, and
  $f_2$ gives them the same value (since $b_i = 0$), so
  $\lambda_{r,m_1},\lambda_{r,m_2}$ have opposite signs, so their sum is zero. This is true for every pair, so the overall sum is zero, and the claim is proved.
\emph{(End of the proof of the claim.)}

We prove (\ref{eq:lin_comb}) coordinate-wise. First we look at the coordinate
$r$. Consider the set $W$ of all vectors $u_{T_m}$ for which the coordinate $r$
is 0. If we show that the sum of the corresponding coefficients of the vectors in $W$
is $-1$, using the claim we will deduce the result in this case. We distinguish 2 cases:

\textbf{Case (I): $r=0$.} By the definition of $u_T$, in this case the set $W$
contains two vectors, $u_{[k]}$ and $u_\emptyset$. The corresponding
coefficients are $\lambda_{r,0} = -\frac{1}{2}$ and $\lambda_{r,2^k-1} = -\frac{1}{2}$ (since $k$ is even), which sum up to $-1$.

\textbf{Case (II): $r>0$.} As in the proof of the claim, let
$b_1b_2\hdots b_k$ be the binary representation of $r$, and choose a coordinate
$1\leq i\leq k$ for which $b_i=1$. Partition the vectors in $W$ into pairs where
$u_{T_{m_1}},u_{T_{m_2}}$ is a pair if and only if $m_1,m_2$ differ in the
$i$-th coordinate only. This is clearly a partition of all vectors, and by the
definition of $u_T$, each such pair is either contained in $W$ or disjoint from
$W$ so this is indeed a partition of $W$. (Note that $u_{T_{m_1}}[r]$ is
determined by $T_{m_1}\cap \zeros(r)$ which is in fact $\zeros(m_1)\cap
\zeros(r)$, and the same for $m_2$. Since $m_1,m_2$ differ in the $i$-th
coordinate only, and $r$ is not zero in this coordinate, this coordinate
contributes nothing to the intersections, and so both these intersections are
empty or non-empty together. The intersection never equals $[k]$ since $r>0$.)
For every such pair in $W$, if it does not contain the negation of bin$(r)$,
then there is some other index $j\ne i$ in which $r,m_1,m_2$ are all
1. (This is because in all other coordinates $m_1,m_2$ are equal, and
since they are not the negation of $r$, there is some coordinate $j\ne i$ in
which they are equal to the $j$-th coordinate of $r$. These coordinates cannot
be all 0, since this would imply $u_{T_{m_1}},u_{T_{m_2}}\notin W$.) This
implies that $f_2$ gives $m_1,m_2$ the same value, and clearly $f_1$ gives them
different parity values, so $\lambda_{r,m_1}+\lambda_{r,m_2} = 0$. However, for
the pair which contains the negation of $r$ (this pair is clearly in $W$),
suppose without loss of generality the negation is $m_1$. Then $f_2$ gives
$m_1,m_2$ the values $1,0$ respectively, and $f_1$ gives $m_1$ an even value and
$m_2$ an odd value (since $k$ is even), and so $\lambda_{r,m_1} = \lambda_{r,m_2} = -\frac{1}{2}$, and the overall sum is $-1$.
This finishes the proof of~(\ref{eq:lin_comb}) in the coordinate $r$.

\medskip
Now let $r'\ne r$ be some other coordinate, and let $c_1c_2\hdots c_k$ be its
binary representation. First, if $r' = 2^k - 1$ then for all $m$ we have $u_{T_m}[r'] =
0$ by definition, so the linear combination of this coordinate is 0. So suppose
$r'<2^k-1$. As before let $W$ be the set of all vectors $u_{T_m}$ for which the
coordinate $r'$ is 0. We show that the sum of the corresponding coefficients is
zero, and again deduce the result using the claim. Now, there exists
some index $i$ for which $b_i\ne c_i$. Again we have two cases:

\textbf{Case (1): $b_i = 0, c_i = 1$.} Partition the vectors in $W$ into pairs
where $u_{T_{m_1}},u_{T_{m_2}}$ is a pair if and only if $m_1,m_2$ differ in the
$i$-th coordinate only. This is clearly a partition of all the vectors, and by
the definition of $u_T$, each such pair is either contained in $W$ or disjoint
from $W$, so this is indeed a partition of $W$. For every such pair in $W$,
$f_1$ gives $m_1,m_2$ different parity values, and $f_2$ gives them the same
value (since $b_i = 0$), so $\lambda_{r,m_1},\lambda_{r,m_2}$ have opposite signs, so their sum is zero. This is true for every pair in $W$, so the overall sum is zero.

\textbf{Case (2) $b_i = 1, c_i = 0$.} Here we consider two sub-cases:

\textbf{Case (2a): $r' = 0$.} The only vectors in $W$ in this case are $u_{[k]}$
and $u_\emptyset$. The corresponding coefficients are $\lambda_{r,0} =
\frac{1}{2}(-1)^{\text{Ham}(r) + 1}$ and $\lambda_{r,2^k-1} =
\frac{1}{2}(-1)^{\text{Ham}(\neg r)}$, where $\neg r$ denotes the negation of
the binary representation of $r$. Since $k$ is even, we know that $r,\neg r$ have the same parity, and so the sum of the two coefficients is 0.

\textbf{Case (2b): $r' \ne 0$.} Choose some $j$ for which $c_j = 1$. Partition
the vectors in $W$ into pairs where $u_{T_{m_1}},u_{T_{m_2}}$ is a pair if and
only if $m_1,m_2$ differ in the $j$-th coordinate only. The argument for this
being a partition of $W$ is similar to the argument in Case (1). For each pair in $W$, $f_1$ gives $m_1,m_2$ a different parity as always, and $f_2$ gives them the same value, since $r,m_1,m_2$ are all 1 in the index $i$ (similar argument as before), so the sum of coefficients is 0 for each pair, and so for all coefficients corresponding to vectors in $W$.

This finishes the proof of (\ref{eq:lin_comb}), and so $v_P$ is a linear combination of the vectors $u_T$. From the result above and Lemma~\ref{lem:sparse_by_linear_combination} we deduce that $G$ admits $P$-sparsification with error $\varepsilon$ using $O\left(n\varepsilon^{-2}\log\frac{1}{\varepsilon}\right)$ hyperedges, as required.
\end{proof}

To complete the picture for even $k$, we reduce to
Proposition~\ref{prp:even_not_all_one} by a simple ``complementarity trick''

\begin{proposition}
\label{prop:all_even_hypegraphs}
Let $k$ be an even number, and $G$ a $k$-uniform hypergraph. Let $P$ be a $k$-ary Boolean predicate. Then for every $\varepsilon>0$, $G$ admits $P$-sparsification with error $\varepsilon$ using $O\left(n\varepsilon^{-2}\log\frac{1}{\varepsilon}\right)$ hyperedges.
\end{proposition}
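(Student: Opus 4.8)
The plan is to split on the value $P(1,1,\hdots,1)$. If $P(1,1,\hdots,1)=0$, then Proposition~\ref{prp:even_not_all_one} applies verbatim and there is nothing more to do. The only remaining case is $P(1,1,\hdots,1)=1$, which I would reduce to the previous one by a complementation (``complementarity'') trick, using the fact that complementing a predicate changes the value on every hyperedge by a fixed amount.

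Concretely, set $\overline{P}:=1-P$, i.e. $\overline{P}(b_1,\hdots,b_k)=1-P(b_1,\hdots,b_k)$ for all $(b_1,\hdots,b_k)\in\{0,1\}^k$. Then $\overline{P}(1,\hdots,1)=0$, so by Proposition~\ref{prp:even_not_all_one}, for every $\varepsilon>0$ the hypergraph $G$ admits $\overline{P}$-sparsification with error $\varepsilon$ using $O\!\left(n\varepsilon^{-2}\log\frac1\varepsilon\right)$ hyperedges; let $G_\varepsilon=(V,E_\varepsilon)$ be such a sparsifier. I claim the very same $G_\varepsilon$ is also a $P$-sparsifier of $G$. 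Indeed, for any $k$-uniform hypergraph $H=(V,E_H)$ and any assignment $a:V\to\{0,1\}$ we have $\val_{H,P}(a)=|E_H|-\val_{H,\overline{P}}(a)$, since each hyperedge of $H$ contributes $1$ to exactly one of $\val_{H,P}(a)$ and $\val_{H,\overline{P}}(a)$. Substituting $H=G$ and $H=G_\varepsilon$ into $\frac{|E|}{|E_\varepsilon|}\val_{G_\varepsilon,P}(a)-\val_{G,P}(a)$, the contribution $\frac{|E|}{|E_\varepsilon|}\cdot|E_\varepsilon|=|E|$ cancels against $|E|$, leaving exactly $-\bigl(\frac{|E|}{|E_\varepsilon|}\val_{G_\varepsilon,\overline{P}}(a)-\val_{G,\overline{P}}(a)\bigr)$. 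Taking absolute values, the additive error bound $\varepsilon(d_G|\zs_a|+\vol_G(\zs_a))$ that $G_\varepsilon$ satisfies for $\overline{P}$ transfers unchanged to $P$, with the same number of hyperedges. Together with the case $P(1,\hdots,1)=0$, this proves the proposition.

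I expect essentially no obstacle here; the statement is labelled ``simple'' for good reason. The only point that needs a line of care is that the multiplicative rescaling factor $|E|/|E_\varepsilon|$ interacts correctly with the additive constant $|E|$ produced by complementation, which the cancellation above confirms exactly (not merely up to a constant). As an alternative phrasing of the same argument, one can invoke Lemma~\ref{lem:sparse_by_linear_combination}: the constant predicate $\mathbf{1}$ (with $v_{\mathbf{1}}=(1,\hdots,1)\in\mathbb{R}^{2^k}$) is trivially $\mathbf{1}$-sparsified with error $0$ by \emph{every} subhypergraph, in particular by $G_\varepsilon$ for each $\varepsilon$; since $v_P=v_{\mathbf{1}}-v_{\overline{P}}$, applying the lemma with $P_1=\mathbf{1}$, $P_2=\overline{P}$, $\lambda_1=1$, $\lambda_2=-1$ and common witness $G_\varepsilon$ yields $P$-sparsification with $O\!\left(n\varepsilon^{-2}\log\frac1\varepsilon\right)$ hyperedges. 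Note that evenness of $k$ is used only inside Proposition~\ref{prp:even_not_all_one}; the reduction itself is insensitive to the parity of $k$.
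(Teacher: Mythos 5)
Your proof is correct and uses the same complementation trick as the paper: set $\overline{P}=1-P$, apply Proposition~\ref{prp:even_not_all_one} to $\overline{P}$, and observe that the two $|E|$ contributions cancel exactly under the rescaling $|E|/|E_\varepsilon|$, so the additive error bound transfers unchanged. (Your alternative phrasing via Lemma~\ref{lem:sparse_by_linear_combination} with $v_P=v_{\mathbf 1}-v_{\overline{P}}$ is also valid but is not the route the paper takes.)
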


\begin{proof}
If $P(1,1,\hdots,1) = 0$ then we are done by
  Proposition~\ref{prp:even_not_all_one}. Otherwise, we have $P(1,1,\hdots,1) =
  1$, and consider $\overline{P}:\{0,1\}^k\to\{0,1\}$ defined by $\overline{P}(b_1,\hdots,b_k) = 1 - P(b_1,\hdots,b_k)$, so $v_{\overline{P}}$ is the negation of $v_P$. Since $\overline{P}$ has $\overline{P}(1,1,\hdots,1) = 0$, Proposition~\ref{prp:even_not_all_one} applies, and gives us a subhypergraph $G_\varepsilon$ for each $\varepsilon>0$, such that for every assignment $a:V\to \{0,1\}$ we have
\begin{equation}
\label{eq:sparse_negation}
\left|\frac{|E|}{|E_\varepsilon|}\val_{G_\varepsilon, \overline{P}}(a) - \val_{G, \overline{P}}(a)\right| \leq \varepsilon(d_G|\zs_a| + \vol_G(\zs_a)).
\end{equation}
Now, since $v_P + v_{\overline{P}} = \textbf{1}$ we have
  \[\val_{G,P}(a) + \val_{G,\overline{P}}(a) = \langle v_P,v_{G,a}\rangle +
  \langle v_{\overline{P}},v_{G,a}\rangle = \langle \textbf{1}, v_{G,a}\rangle =
  |E|,\]
and the same is true for $G_\varepsilon$, and so we get
\begin{equation}
\label{eq:negate_G}
\val_{G,\overline{P}}(a) = |E| - \val_{G,P}(a)
\end{equation}
and
\begin{equation}
\label{eq:negate_G_e}
\val_{G_\varepsilon,\overline{P}}(a) = |E_\varepsilon| - \val_{G_\varepsilon,P}(a).
\end{equation}
Using (\ref{eq:negate_G}) and (\ref{eq:negate_G_e}) in (\ref{eq:sparse_negation}), we get
  \[\left|\frac{|E|}{|E_\varepsilon|}(|E_\varepsilon| -
  \val_{G_\varepsilon,P}(a)) - (|E| - \val_{G,P}(a))\right|\leq
  \varepsilon(d_G|\zs_a| + \vol_G(\zs_a)),\]
and after rearranging,
  \[\left|\frac{|E|}{|E_\varepsilon|}\val_{G_\varepsilon,P}(a) -
  \val_{G,P}(a)\right|\leq \varepsilon(d_G|\zs_a| + \vol_G(\zs_a)),\]
as required.
\end{proof}

The final piece in the jigsaw
shows how to reduce sparsification of $k$-uniform
hypergraphs with $k$ odd to the case of $(k+1)$-uniform hypergraphs by adding a
universal vertex and extending the original predicate by one dimension. 

\begin{proposition}
\label{prp:all_odd_hypergraphs}
Let $k$ be an odd number, and $G = (V,E)$ a $k$-uniform hypergraph. Let $P$ be a $k$-ary Boolean predicate. Then for every $\varepsilon>0$, $G$ admits $P$-sparsification with error $\varepsilon$ using $O\left(n\varepsilon^{-2}\log\frac{1}{\varepsilon}\right)$ hyperedges.
\end{proposition}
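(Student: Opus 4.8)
The plan is to reduce the odd case to the even case, which is already covered by Proposition~\ref{prop:all_even_hypegraphs}. Given a $k$-uniform hypergraph $G = (V,E)$ with $k$ odd and $|V|=n$, I would introduce a fresh ``universal'' vertex $u\notin V$ and build the $(k+1)$-uniform hypergraph $G' = (V\cup\{u\},E')$ with $E' = \{(v_1,\hdots,v_k,u):(v_1,\hdots,v_k)\in E\}$. Then $G'$ has $n+1$ vertices, $|E'| = |E|$, there is an obvious bijection between $E$ and $E'$, and the degree of every $v\in V$ is the same in $G$ and in $G'$. I would extend $P$ to a $(k+1)$-ary predicate $P'$ by $P'(b_1,\hdots,b_k,1) = P(b_1,\hdots,b_k)$ and $P'(b_1,\hdots,b_k,0) = 0$, the latter value being immaterial.

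Next, since $k+1$ is even, Proposition~\ref{prop:all_even_hypegraphs} applied to $G'$ and $P'$ yields, for every $\varepsilon_0>0$, a subhypergraph $G'_{\varepsilon_0} = (V\cup\{u\},E'_{\varepsilon_0})$ that $P'$-sparsifies $G'$ using $O\left((n+1)\varepsilon_0^{-2}\log\frac{1}{\varepsilon_0}\right) = O\left(n\varepsilon_0^{-2}\log\frac{1}{\varepsilon_0}\right)$ hyperedges. Let $G_{\varepsilon_0} = (V,E_{\varepsilon_0})$ be the subhypergraph of $G$ corresponding to $E'_{\varepsilon_0}$ under the bijection. The key observation is that for any $a:V\to\{0,1\}$, extending it to $a':V\cup\{u\}\to\{0,1\}$ by $a'(u)=1$ and $a'|_V = a$, every hyperedge of $G'$ contains $u$ with $a'(u)=1$, so $\val_{G',P'}(a') = \val_{G,P}(a)$ and $\val_{G'_{\varepsilon_0},P'}(a') = \val_{G_{\varepsilon_0},P}(a)$. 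Moreover $\zs_{a'} = \zs_a$ (since $u\notin\zs_{a'}$), hence $|\zs_{a'}| = |\zs_a|$ and $\vol_{G'}(\zs_{a'}) = \vol_G(\zs_a)$. Plugging these identities, together with $|E'| = |E|$ and $|E'_{\varepsilon_0}| = |E_{\varepsilon_0}|$, into the $P'$-sparsification inequality for $G'$ at $a'$ turns its left-hand side into exactly $\left|\frac{|E|}{|E_{\varepsilon_0}|}\val_{G_{\varepsilon_0},P}(a) - \val_{G,P}(a)\right|$.

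The only point requiring care is the error term: the right-hand side becomes $\varepsilon_0(d_{G'}|\zs_a| + \vol_G(\zs_a))$, and $d_{G'} = \frac{(k+1)|E|}{n+1}$ differs from $d_G = \frac{k|E|}{n}$ by a factor $\frac{(k+1)n}{k(n+1)} \le \frac{k+1}{k}$, so $\varepsilon_0(d_{G'}|\zs_a| + \vol_G(\zs_a)) \le \frac{k+1}{k}\varepsilon_0(d_G|\zs_a| + \vol_G(\zs_a))$. Choosing $\varepsilon_0 = \frac{k}{k+1}\varepsilon$ therefore delivers the bound $\varepsilon(d_G|\zs_a| + \vol_G(\zs_a))$; since $k$ is a constant, $|E_{\varepsilon_0}| = O\left(n\varepsilon^{-2}\log\frac{1}{\varepsilon}\right)$ still holds, and $G_{\varepsilon_0}$ is the desired $P$-sparsifier. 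I do not expect any genuine obstacle beyond this bookkeeping: all the substance sits in Proposition~\ref{prop:all_even_hypegraphs}, and the reduction through a single universal vertex is entirely mechanical.
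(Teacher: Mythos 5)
Your proposal is correct and matches the paper's proof essentially verbatim: the same universal-vertex construction of $G'$ and $P'$, the same extension $a'$ with $a'(u)=1$, the same observation that $\zs_{a'}=\zs_a$ and volumes are preserved, and the same choice $\varepsilon_0 = \frac{k}{k+1}\varepsilon$ to absorb the $\frac{k+1}{k}$ factor from the average degree of $G'$. (In fact you carry the normalization factor $\frac{|E|}{|E_{\varepsilon_0}|}$ explicitly throughout, which the paper's displayed inequalities elide.)
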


\begin{proof}
Let $\varepsilon>0$ and denote $\varepsilon_0 = \varepsilon\frac{k}{k+1}$.
  Consider the hypergraph $G' = (V',E')$ defined by
  \[V' = V\cup \{v_0\}\quad,\quad E' = \{(v_1,\hdots,v_k,v_0):
  (v_1,\hdots,v_k)\in E\},\]
where $v_0\notin V$ is a new vertex. Clearly $G'$ is a $(k+1)$-uniform hypergraph, and $k+1$ is even. We define a new predicate $P':\{0,1\}^{k+1}\to \{0,1\}$ by
  \[P'(b_1,\hdots,b_{k+1}) = \left\{\begin{array}{lr}
1&\qquad P(b_1,\hdots,b_k) = 1, b_{k+1} = 1 \\ 
0&\qquad \text{otherwise}
  \end{array}\right..\]
We may therefore apply Proposition~\ref{prop:all_even_hypegraphs} for $G',P',\varepsilon_0$ and deduce that for every assignment $a':V'\to\{0,1\}$ we have
\begin{equation}
\label{eq:extra_vertex}
\left|\val_{G'_{\varepsilon_0}, P'}(a') - \val_{G', P'}(a')\right|\leq \varepsilon_0(d_{G'}|\zs_{a'}|+\vol_{G'}(\zs_{a'}))
\end{equation}
for some subhypergraph $G'_{\varepsilon_0} = (V',E'_{\varepsilon_0}\subseteq E')$ which does not depend on $a'$, and which satisfies $|E'_{\varepsilon_0}| = O\left(n\varepsilon_0^{-2}\log\frac{1}{\varepsilon_0}\right) = O\left(n\varepsilon^{-2}\log\frac{1}{\varepsilon}\right)$.

Let $G_{\varepsilon_0} = (V,E_{\varepsilon_0}\subseteq E)$ be the corresponding subhypergraph of $G$ (so $|E_{\varepsilon_0}| = O\left(n\varepsilon^{-2}\log\frac{1}{\varepsilon}\right)$), and for every assignment $a:V\to\{0,1\}$ define 
  \[a':V'\to \{0,1\}\quad,\quad a'(v) = \left\{\begin{array}{lr}
a(v)&\qquad v\in V \\
1&\qquad v = v_0 \\
  \end{array}\right..\]

Since the hyperedge $e = (v_1,\hdots,v_k)\in E$ is counted in $\val_{G,P}(a)$ if and only if the hyperedge $e' = (v_1,\hdots,v_r,v_0)\in E'$ is counted in $\val_{G',P'}(a')$, we have that
  \[\val_{G,P}(a) = \val_{G',P'}(a'),\]
  and the same is true for $G_{\varepsilon_0}$ and $G'_{\varepsilon_0}$. We get
\begin{align*}
\left|\val_{G_{\varepsilon_0}, P}(a) - \val_{G, P}(a)\right| & = \left|\val_{G'_{\varepsilon_0}, P'}(a') - \val_{G', P'}(a')\right| \\
& \leq \varepsilon_0(d_{G'}|\zs_{a'}|+\vol_{G'}(\zs_{a'})) \\
& = \varepsilon_0\left(\frac{(k+1)|E'|}{|V'|}|\zs_a|+\vol_G(\zs_a)\right) \\ 
& = \varepsilon_0\left(\frac{(k+1)|E|}{|V|+1}|\zs_a|+\vol_{G}(\zs_a)\right)\\
& \leq \varepsilon_0\frac{k+1}{k}\left(\frac{k|E|}{|V|}|\zs_a|+\vol_{G}(\zs_a)\right) \\
& = \varepsilon\left(d_G|\zs_a|+\vol_{G}(\zs_a)\right),
\end{align*}
where the second line is due to (\ref{eq:extra_vertex}), the next two lines are
  by the definition of $G'$ and $a'$, the fifth line is by rearranging, and the last line is by the definition of the average degree of $G$. We get the required result.
\end{proof}

Propositions~\ref{prop:all_even_hypegraphs} and~\ref{prp:all_odd_hypergraphs} complete the proof of
Theorem~\ref{thm:additive_boolean}.

\begin{remark} \label{rmk:same_subhypergraph}
In the proof of Proposition~\ref{prp:even_not_all_one} the hypergraph
$G_\varepsilon$ was chosen independently of the predicate $P$. Since
Propositions~\ref{prop:all_even_hypegraphs} and~\ref{prp:all_odd_hypergraphs} reduce to that case, we have in fact
shown that for every $\varepsilon>0$, Theorem~\ref{thm:additive_boolean} is
witnessed by the same subhypergraph $G_\varepsilon$ for all different
predicates $P$. This will be important in the proof of
  Theorem~\ref{thm:larger_domains}.
\end{remark}

\begin{remark} \label{rmk:not_constant}
We note that our main result, Theorem~\ref{thm:additive_boolean}, extends  
Theorem~\ref{thm:additive_cut} in the regime where $k$ is a constant, which is
the main focus of this paper. However, Theorem~\ref{thm:additive_cut} also
works for non-constant $k$~\cite{additive}. If $k$ is not a constant, it can
be seen from the proof of Lemma~\ref{lem:sparse_by_linear_combination} that
the number of hyperedges of the sparse subhypergraph is multiplied by a factor
of $O(m^2)$ (since $O(m)$  is the proportion between $\varepsilon$ and
$\varepsilon_0$ given that the coefficients $\lambda_i$ are constant). In
Proposition~\ref{prp:even_not_all_one} we have $m = 2^k$, and so for $k$ not
constant we get an additional factor of $4^k$. Furthermore, in
Propositions~\ref{prp:directed_cut} and~\ref{prp:even_not_all_one} we obtain 
extra factors of $k$, by considering
the $k$-partite $k$-fold cover.
While the regime with non-constant $k$ is interesting for cuts, for arbitrary
predicates one needs to be careful about representation as the natural
(explicit) representation of (non-symmetric) predicates requires exponential
  space in the arity $k$.
\end{remark}

\begin{remark}
  As observed by one of the reviewers, our sparsification result
  (Theorem~\ref{thm:additive_boolean}) actually shows sparsification under a
  stronger notion of sparsification, in which the right-hand side in~(\ref{eq:main_sparsify}) in Definition~\ref{def:additive_hypergraphs}
  is tighter. Namely, in the notation of
  Definition~\ref{def:additive_hypergraphs},
  we can require that
\begin{equation}
\label{eq:main_sparsify2}
\left|\frac{|E|}{|E_\varepsilon|}\val_{G_\varepsilon,P}(a) -
  \val_{G,P}(a)\right|\leq 
  \varepsilon\min(d_G|\zs_a| + \vol_G(\zs_a),d_G|\zs_{a'}|+\vol_G(\zs_{a'})),
\end{equation}
  where $a'(v)=1-a(v)$ for every $v\in V$. In detail, Theorem~\ref{thm:additive_cut}
  works for any assignment and thus in particular for $a'$, the value of the
  $\cut$ predicate is the same on $a$ and $a'$ (and thus also the left-hand side
  of~(\ref{eq:main_sparsify2}) is the same for $a$ and $a'$), and the rest is
  reductions that preserve~(\ref{eq:main_sparsify2}).
\end{remark}

\section{Sparsification of Non-Boolean Predicates}
\label{sec:non-Boolean}

We now focus on non-Boolean predicates; i.e., predicates of the form
$P:D^k\to\{0,1\}$ with $|D|>2$. Without loss of generality, we assume $D=[q]$
for some $q\geq 2$. The most natural way of generalising
Theorem~\ref{thm:additive_boolean} to larger domains appears to be to use the
same bound with $\zs_a = \{v\in V: a(v) = 0\}$. This, however, cannot give 
the desired sparsification result (cf. Section~\ref{sec:optimal}). Instead we use a
different and somewhat weaker kind of generalisation of the Boolean case, and
show that all hypergraphs are still sparsifiable with respect to all predicates
using this definition.

\begin{definition}
\label{def:all_but_one}
  Let $P:D^k\to\{0,1\}$ be a $k$-ary predicate where $D=[q]$. We say that a $k$-uniform
  hypergraph $G=(V,E)$ \emph{admits all-but-one $P$-sparsification} with error $\varepsilon$ using $O(f(n,\varepsilon))$ hyperedges if there exists a subhypergraph $G_\varepsilon = (V,E_\varepsilon\subseteq E)$ with $|E_\varepsilon| = O(f(n,\varepsilon))$ such that for every assignment $a:V\to D$ we have
\begin{equation}
\label{eq:large_domain_sparsification}
\left|\frac{|E|}{|E_\varepsilon|}\val_{G_\varepsilon,P}(a) - \val_{G,P}(a)\right| \leq \varepsilon(d_G|M_a|+\vol_G(N_a)),
\end{equation}
where $M_a$ is the largest set among the sets $\{v\in V: a(v) = i\}$, $N_a$ is the set with the largest volume among the sets $\{v\in V: a(v) = i\}$ for $0\leq i\leq q-2$, and $d_G$ is the average degree in $G$.
\end{definition}

Observe that the maximum in Definition~\ref{def:all_but_one} is over $i = 0,\hdots,q-2$ without $i=q-1$, hence the name
``all-but-one''. We note that there is nothing special about $q-1$ and any value
from $[q]$ could be chosen in Definition~\ref{def:all_but_one}.

Under Definition~\ref{def:all_but_one}, Theorem~\ref{thm:additive_boolean}
generalises.

\begin{theorem}
\label{thm:larger_domains}
For every $k$-uniform hypergraph $G = (V,E)$, every $k$-ary predicate
  $P:D^k\to\{0,1\}$ with $D=[q]$ ($k,q$ are constants), and every $\varepsilon>0$, $G$ admits $P$ all-but-one sparsification with error $\varepsilon$ using $O\left(n\varepsilon^{-2}\log\frac{1}{\varepsilon}\right)$ hyperedges.
\end{theorem}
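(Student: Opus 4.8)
### Plan for the Proof of Theorem~\ref{thm:larger_domains}

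The plan is to reduce the non-Boolean case to the already-established Boolean case (Theorem~\ref{thm:additive_boolean}) by ``binarising'' the domain $D=[q]$. First I would fix a hypergraph $G=(V,E)$, a predicate $P:[q]^k\to\{0,1\}$, and an error parameter $\varepsilon>0$. The key observation is that every element of $[q]$ can be encoded as a string of $\lceil\log_2 q\rceil$ bits; equivalently, an assignment $a:V\to[q]$ corresponds to a tuple of Boolean assignments. To make this clean, I would replace each vertex $v\in V$ by $s=\lceil\log_2 q\rceil$ copies $v^{(1)},\dots,v^{(s)}$, obtaining a vertex set $\widetilde V$ of size $sn$, and replace each hyperedge $(v_1,\dots,v_k)\in E$ by the single hyperedge on the $sk$ vertices $\{v_j^{(t)}: 1\le j\le k,\ 1\le t\le s\}$, giving a $(sk)$-uniform hypergraph $\widetilde G$. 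A $[q]$-assignment $a$ of $G$ induces a Boolean assignment $\tilde a$ of $\widetilde G$ via the chosen encoding, and conversely every Boolean assignment of $\widetilde G$ that is ``consistent'' (i.e.\ whose $s$-bit blocks all name valid elements of $[q]$ when $q$ is not a power of two, and which uses the same block pattern across all copies — actually this is automatic since each $v^{(t)}$ is independent) restricts back. I would then define a Boolean predicate $\widetilde P:\{0,1\}^{sk}\to\{0,1\}$ by $\widetilde P(\text{blocks }b_1,\dots,b_k)=P(\mathrm{dec}(b_1),\dots,\mathrm{dec}(b_k))$ when all blocks decode to valid elements, and $0$ otherwise. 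With these definitions $\val_{G,P}(a)=\val_{\widetilde G,\widetilde P}(\tilde a)$ for every $[q]$-assignment $a$, and likewise for subhypergraphs.

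Next I would apply Theorem~\ref{thm:additive_boolean} to $\widetilde G$ and $\widetilde P$ with a rescaled error $\varepsilon_0$ (to be fixed), obtaining a subhypergraph $\widetilde G_{\varepsilon}$ with $O(sn\,\varepsilon_0^{-2}\log\frac1{\varepsilon_0})=O(n\varepsilon^{-2}\log\frac1\varepsilon)$ hyperedges (recall $s$, $k$, $q$ are constants), and by the 1--1 correspondence between hyperedges of $\widetilde G$ and of $G$ this yields a subhypergraph $G_\varepsilon$ of $G$ with the right number of hyperedges, chosen independently of $a$. It is here that Remark~\ref{rmk:same_subhypergraph} is used: the Boolean sparsifier can be taken the same for all predicates, which lets the single subhypergraph $G_\varepsilon$ work uniformly. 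Using $\val_{G,P}(a)=\val_{\widetilde G,\widetilde P}(\tilde a)$ and the analogous identity for $G_\varepsilon$, the sparsification inequality for $\widetilde G$ gives
\[
\left|\frac{|E|}{|E_\varepsilon|}\val_{G_\varepsilon,P}(a)-\val_{G,P}(a)\right|
\le \varepsilon_0\bigl(d_{\widetilde G}\,|\zs_{\tilde a}|+\vol_{\widetilde G}(\zs_{\tilde a})\bigr),
\]
so everything reduces to bounding the right-hand side in terms of $d_G|M_a|+\vol_G(N_a)$.

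The main obstacle — and the reason the definition of all-but-one sparsification is shaped the way it is — is controlling $|\zs_{\tilde a}|$ and $\vol_{\widetilde G}(\zs_{\tilde a})$. A vertex $v^{(t)}$ lies in $\zs_{\tilde a}$ precisely when the $t$-th bit of the encoding of $a(v)$ is $0$; summing over the $s$ bit-positions, $|\zs_{\tilde a}|=\sum_{t=1}^s |\{v: \text{bit }t\text{ of }\mathrm{enc}(a(v))\text{ is }0\}|$, and each such set is a union of at most $q$ of the level sets $\{v:a(v)=i\}$, and moreover — by choosing the encoding cleverly, e.g.\ so that value $q-1$ is encoded as the all-ones string — I can ensure that for each bit position $t$ at least one value of $i$ contributes a $1$, so the $0$-set for bit $t$ is contained in $\bigcup_{i\ne i_t}\{v:a(v)=i\}$ for some $i_t$. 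Hence $|\zs_{\tilde a}|\le s\cdot(q-1)\cdot|M_a|$ up to the trick of excluding the largest part, and similarly $\vol_{\widetilde G}(\zs_{\tilde a})\le s\cdot(q-1)\cdot\vol_G(N_a)$ — here I should be careful that $\vol_{\widetilde G}$ of a set of $v^{(t)}$'s relates to $\vol_G$ of the underlying $v$'s by a factor depending only on $s$ and $k$ (each hyperedge of $\widetilde G$ has $sk$ vertices versus $k$ in $G$). Also $d_{\widetilde G}=|E|\cdot(sk)/(sn)\cdot(\text{const})$ relates to $d_G$ by a constant. Putting these constant factors together and choosing $\varepsilon_0=\varepsilon/C(s,q,k)$ for the appropriate constant $C$ absorbs everything and yields exactly the bound $\varepsilon(d_G|M_a|+\vol_G(N_a))$ required by Definition~\ref{def:all_but_one}. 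The delicate point to get right in the write-up is the bookkeeping that makes the ``all-but-one'' (drop the largest level set) feature on the right-hand side of the Boolean bound match the $\max$ over $i\le q-2$ in the target bound; this is exactly what forces the all-but-one formulation rather than a naive one based on $|\zs_a|$.
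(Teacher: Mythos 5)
Your reduction is correct but genuinely different from the paper's. The paper keeps the arity at $k$ by passing to the $k$-partite $k$-fold cover $\gamma(G)$ and decomposes $v_P$ into standard basis vectors (singleton predicates): for each singleton $e_r$ with support $u_r=\rep_{q,k}(r)$ it builds a tailored Boolean assignment $a_r$ on $\gamma(G)$ --- coordinates $i$ with $u_r[i]=q-1$ are treated specially so that $\zs_{a_r}$ never contains any $v^{(i)}$ with $a(v)=q-1$ --- and a matching Boolean singleton $P_r$, applies Theorem~\ref{thm:additive_boolean} to each, and then sums (this is why Remark~\ref{rmk:same_subhypergraph} is actually needed there). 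You instead blow up the arity: each vertex becomes $s=\lceil\log_2 q\rceil$ bit-copies, each $k$-edge becomes one $sk$-edge, and a single Boolean predicate $\widetilde P$ decodes the bit-blocks, so Theorem~\ref{thm:additive_boolean} is applied just once with no singleton/linear-combination step at this stage. Choosing $\mathrm{enc}(q-1)$ to be the all-ones block plays exactly the same role as the paper's special treatment of the set $T$: it guarantees $\zs_{\tilde a}$ never touches copies of vertices assigned $q-1$, which is precisely what makes the all-but-one bound emerge. Your bookkeeping checks out (in fact $\deg_{\widetilde G}(v^{(t)})=\deg_G(v)$ exactly and $d_{\widetilde G}=d_G$, so the absorbed constant is simply $s(q-1)$). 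One small inaccuracy: you don't actually need Remark~\ref{rmk:same_subhypergraph} --- you invoke Theorem~\ref{thm:additive_boolean} once for the single predicate $\widetilde P$, so a single subhypergraph already works for all assignments; the remark is only needed when combining several predicates, which your route avoids. What each approach buys: yours is conceptually more direct and uses the Boolean result as a genuine black box; the paper's keeps the uniformity $k$ and reuses the same cover machinery as the Boolean proof, which is preferable if one cares about tracking the dependence of constants on arity (your $\widetilde G$ has arity $sk$ rather than $k$).
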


Note that in the case of $q=2$ we have $P:\{0,1\}^k\to \{0,1\}$, and Definition~\ref{def:all_but_one} and Theorem~\ref{thm:larger_domains} coincide with Definition~\ref{def:additive_hypergraphs} and Theorem~\ref{thm:additive_boolean}. This is because when $q=2$ the definitions of $M_a,N_a$ coincide with the definition of $\zs_a$ in the Boolean case. 

In order to prove Theorem~\ref{thm:larger_domains}, we will generalise our notations from
Section~\ref{sec:main}. For a $k$-ary predicate $P:D^k\to
\{0,1\}$ we consider the vector $v_P\in \mathbb{R}^{q^k}$, defined by 
$v_P[i] = P(\rep_{q,k}(i))$.
For a given hypergraph $G$ and an assignment $a$, we consider the vector
$v_{G,a}\in \mathbb{R}^{q^k}$ defined by 
$v_{G,a}[i] = \left|\{(v_1,\hdots,v_k)\in E: (a(v_1),\hdots,a(v_k)) =
\rep_{q,k}(i)\}\right|$.
In other words, each coordinate of $v_{G,a}$ counts the hyperedges in $G$ whose vertices are assigned some specific set of values by $a$. Under these notations, just as before, we get
$\val_{G,P}(a) = \langle v_P, v_{G,a}\rangle$,
where $\langle\cdot,\cdot\rangle$ is the standard inner product in $\mathbb{R}^{q^k}$.

We start by proving the result for \emph{singleton predicates}, i.e.\ for predicates $P$ such that $v_P = e_r$ for some $0\leq r\leq q^k-1$.

\begin{lemma}
\label{lem:larger_domain_singletons}
Let $G = (V,E)$ be a $k$-uniform hypergraph, and $P:D^k\to\{0,1\}$ a $k$-ary
  singleton predicate with $D=[q]$ ($k,q$ are constants). For every $\varepsilon>0$, $G$ admits $P$ all-but-one sparsification with error $\varepsilon$ using $O\left(n\varepsilon^{-2}\log\frac{1}{\varepsilon}\right)$ hyperedges.
\end{lemma}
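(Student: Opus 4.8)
The plan is to reduce a singleton predicate $P$ with $v_P = e_r$ to a Boolean cut instance by collapsing the domain $[q]$ to $\{0,1\}$ in a way tailored to $r$. Write $(c_1,\ldots,c_k) = \rep_{q,k}(r)$ for the unique tuple accepted by $P$. For each coordinate $\ell \in [k]$ define a map $\phi_\ell : [q] \to \{0,1\}$ by $\phi_\ell(j) = 1$ iff $j = c_\ell$ (or the reverse; the exact choice is a bookkeeping matter, and since $k$ is a constant we need not optimise). Given an assignment $a : V \to [q]$, push it through the cover: as in the proof of Proposition~\ref{prp:even_not_all_one} we work with $\gamma(G)$, the $k$-partite $k$-fold cover, and consider the Boolean assignment on $V^\gamma$ that sends $v^{(\ell)}$ to $\phi_\ell(a(v))$. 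The key observation is that a hyperedge $(v_1,\ldots,v_k)\in E$ satisfies $(a(v_1),\ldots,a(v_k)) = (c_1,\ldots,c_k)$ — i.e. is counted by $\val_{G,P}(a)$ — exactly when its image in $\gamma(G)$ gets the all-ones pattern under this Boolean assignment. So $\val_{G,P}(a)$ equals the number of hyperedges of $\gamma(G)$ assigned all-ones, which is a $\nand$-type (``not all equal to $0$'', or rather ``all equal to $1$'') quantity on the cover. Combining several such Boolean patterns via Theorem~\ref{thm:additive_boolean} (and Remark~\ref{rem:gen}/Remark~\ref{rmk:same_subhypergraph}, which give a single subhypergraph working for all Boolean predicates simultaneously) applied to $\gamma(G)$, we obtain a subhypergraph that sparsifies this count.

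The steps, in order: (i) fix $r$ and the accepting tuple $(c_1,\ldots,c_k)$; (ii) define the coordinatewise collapse maps $\phi_\ell$ and the induced Boolean assignment $a^\flat$ on $V^\gamma$; (iii) verify the counting identity $\val_{G,P}(a) = \val_{\gamma(G), Q}(a^\flat)$ for an appropriate fixed Boolean predicate $Q$ on $\{0,1\}^k$ (the ``all-ones'' indicator, possibly up to the standard $k!$ multiplicity in the undirected cover, which cancels in the ratio), and the same identity for any subhypergraph; (iv) invoke Theorem~\ref{thm:additive_boolean} on $\gamma(G)$ — which has $kn$ vertices, $k$ constant — to get $G_\varepsilon$ with $O(n\varepsilon^{-2}\log\frac{1}{\varepsilon})$ hyperedges and the Boolean error bound $\varepsilon(d_{\gamma(G)}|\zs_{a^\flat}| + \vol_{\gamma(G)}(\zs_{a^\flat}))$; (v) translate that error term back to $G$. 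Step (v) is where the ``all-but-one'' shape of Definition~\ref{def:all_but_one} enters: $\zs_{a^\flat}$ consists, in layer $\ell$, of the copies $v^{(\ell)}$ with $a(v) \ne c_\ell$, so $|\zs_{a^\flat}| = \sum_\ell |\{v : a(v)\ne c_\ell\}| = kn - \sum_\ell |\{v : a(v) = c_\ell\}|$, and similarly for the volume. The point is to bound $kn - \sum_\ell |\{v : a(v)=c_\ell\}|$ (and the volume analogue) by a constant multiple of $|M_a|$ (resp. $\vol_G(N_a)$) — this works because at least one value class is large: $\max_i |\{v : a(v)=i\}| = |M_a| \ge n/q$, and discarding the single value class $q-1$ from the maximum costs at most a factor depending only on $q$, after adjusting $\varepsilon$ by a constant exactly as in Lemma~\ref{lem:sparse_by_linear_combination} and Remark~\ref{rem:gen}.

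The main obstacle I expect is step (v): making the inequality $d_{\gamma(G)}|\zs_{a^\flat}| + \vol_{\gamma(G)}(\zs_{a^\flat}) \le C_q\,(d_G|M_a| + \vol_G(N_a))$ go through cleanly, including the case where the ``missing'' value $q-1$ happens to be the heaviest one. This is precisely the technical reason the definition excludes one value from the maximum, and it is likely why the paper introduces the all-but-one notion in the first place; the argument should amount to observing that for each layer $\ell$ the set $\{v : a(v)\ne c_\ell\}$ misses the value class $c_\ell$, so its complement is one particular value class, and the worst case over $\ell$ is controlled by the second-largest (equivalently, by $M_a$ once we are allowed to drop one class). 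Once the singleton case is settled, general predicates will follow by writing $v_P = \sum_{r : P(\rep_{q,k}(r))=1} e_r$ — a sum of at most $q^k$ (a constant) singletons — and applying a linear-combination argument identical in spirit to Lemma~\ref{lem:sparse_by_linear_combination}, with the common subhypergraph supplied by Remark~\ref{rmk:same_subhypergraph}.
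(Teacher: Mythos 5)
Your overall plan---collapse a non-Boolean assignment to a Boolean one on the $k$-partite $k$-fold cover, apply Theorem~\ref{thm:additive_boolean} there, and translate the error back---is the right framework and matches the paper. But the collapse maps you choose are wrong, and the obstacle you flag in step (v) is a real gap that your argument does not close.

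Concretely, you take the \emph{same} polarity $\phi_\ell(j)=1 \iff j=c_\ell$ in every coordinate, so $\zs_{a^\flat}\cap V^{(\ell)}=\{v^{(\ell)}:a(v)\ne c_\ell\}$. Whenever $c_\ell\ne q-1$, this set contains every $v^{(\ell)}$ with $a(v)=q-1$, which is exactly the value class excluded from $M_a$ and $N_a$. If, say, $a\equiv q-1$ and $c$ contains no $q-1$, then $M_a=N_a=\emptyset$ so the target bound is $0$, while $|\zs_{a^\flat}|=kn$. Hence the inequality $d_{\gamma(G)}|\zs_{a^\flat}|+\vol_{\gamma(G)}(\zs_{a^\flat})\le C_q\,(d_G|M_a|+\vol_G(N_a))$ is simply false. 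Your justification ``$|M_a|\ge n/q$'' does not hold with the paper's definition: $M_a$ is the maximum only over $i\in\{0,\ldots,q-2\}$, and the excluded class $q-1$ may carry everything. Relatedly, ``the exact choice is a bookkeeping matter'' is not right: the per-coordinate choice of polarity is precisely what must be engineered.

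The fix (which is what the paper does) is to mix polarities so that no copy of a vertex with $a(v)=q-1$ is ever placed in the Boolean zero class. Let $T=\{\ell : c_\ell=q-1\}$. On layers $\ell\notin T$ set $a_r(v^{(\ell)})=0 \iff a(v)=c_\ell$; on layers $\ell\in T$ set $a_r(v^{(\ell)})=0 \iff a(v)\ne q-1$. The matching Boolean singleton $P_r$ accepts the pattern that is $0$ outside $T$ and $1$ on $T$. One still has $\val_{G,P}(a)=\val_{\gamma(G),P_r}(a_r)$, but now $\zs_{a_r}$ is supported on vertices with $a(v)\ne q-1$, so $|\zs_{a_r}|\le k(q-1)|M_a|$ and $\vol_{\gamma(G)}(\zs_{a_r})\le (q-1)\vol_G(N_a)$, and the bound follows after rescaling $\varepsilon$ by a $q$-dependent constant. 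Everything else in your outline (using Remark~\ref{rmk:same_subhypergraph} for a common subhypergraph, and summing singletons as in Lemma~\ref{lem:sparse_by_linear_combination} for general $P$) is fine.
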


\begin{proof}
Denote $\varepsilon_0 = \frac{\varepsilon}{q}$. Let $\gamma(G) = (V^\gamma,E^\gamma)$ be the $k$-partite $k$-fold cover of $G$, and let $\gamma(G)_{\varepsilon_0} = (V^\gamma,E_{\varepsilon_0}^\gamma\subseteq E^\gamma)$ be the subhypergraph promised by Theorem~\ref{thm:additive_boolean}. From Remark~\ref{rmk:same_subhypergraph} we know that this is the same subhypergraph for all predicates, and it does not depend on the choice of $P$. As before, let $G_{\varepsilon_0} = (V,E_{\varepsilon_0}\subseteq E)$ be the subhypergraph of $G$ which satisfies $\gamma(G_{\varepsilon_0}) = \gamma(G)_{\varepsilon_0}$.

Let $r$ be the integer for which $v_P = e_r$, and denote $u_r = \rep_{q,k}(r)$.
  Consider the set $T = \{i\in [k]: u_r[i] = q-1\}$. For each assignment $a:V\to [q]$, we want to find a Boolean assignment $a_r:V^\gamma\to \{0,1\}$ which does not assign 0 to any vertices $v^{(i)}$ which have $a(v) = q-1$, but which can also be used to show $P$-sparsification. We define
  \[a_r(v^{(i)}) = \left\{\begin{array}{lr}
0&\qquad i\notin T\text{ and }a(v) = u_r[i]\\ 
0&\qquad i \in T\text{ and } a(v) \ne q-1\\
1&\qquad \text{otherwise}
  \end{array}\right..\]
We also define a $k$-ary Boolean predicate $P_r:\{0,1\}^k\to\{0,1\}$ to only have a single truth value (a singleton) which is $(b_1,\hdots,b_k)$ where 
$b_i=0$ if $i\not\in T$ and $b_i=1$ otherwise.

Observe that 
\begin{equation}
\label{eq:larger_val_g}
\val_{G,P}(a) = \val_{\gamma(G),P_r}(a_r),
\end{equation}
since both count the same hyperedges. The same is true for any hypergraph, and specifically for $G_{\varepsilon_0}$, that is
\begin{equation}
\label{eq:larger_val_gamma_g}
\val_{G_{\varepsilon_0},P}(a) = \val_{\gamma(G_{\varepsilon_0}),P_r}(a_r).
\end{equation}
Using Theorem~\ref{thm:additive_boolean} for $\gamma(G), P_r, \varepsilon_0$ and $a_r$ we get
\begin{align*}
\left|\frac{|E|}{|E_{\varepsilon_0}|}\val_{G_{\varepsilon_0},P}(a) - \val_{G,P}(a)\right| & = \left|\frac{|E^\gamma|}{|E_{\varepsilon_0}^\gamma|}\val_{\gamma(G_{\varepsilon_0}),P_r}(a_r) - \val_{\gamma(G),P_r}(a_r)\right| \\ 
& \leq \varepsilon_0(d_{\gamma(G)}|\zs_{a_r}|+\vol_{\gamma(G)}(\zs_{a_r})) \\
& \leq \varepsilon_0\left(\frac{d_G}{k}\cdot kq|M_a|+q\cdot\vol_G(N_a)\right) \\
& = \varepsilon \left(d_G |M_a|+\vol_G(N_a)\right),
\end{align*}
where the first line follows from (\ref{eq:larger_val_g}), (\ref{eq:larger_val_gamma_g}) and Definition~\ref{def:cover}, the second line is the application of Theorem~\ref{thm:additive_boolean}, the third is again Definition~\ref{def:cover} and the definitions of $a_r,M_a,N_a$, and the last is the definition of $\varepsilon_0$. This is true for every assignment $a$. In addition we have
  \[|E_{\varepsilon_0}| = O\left(n\varepsilon_0^{-2}\log\frac{1}{\varepsilon_0}\right) = O\left(n\varepsilon^{-2}\log\frac{1}\varepsilon\right),\] so $G_{\varepsilon_0}$ is the required witness.
\end{proof}

The proof of Theorem~\ref{thm:larger_domains} is now an application of Lemma~\ref{lem:larger_domain_singletons} similar to the proof of Lemma~\ref{lem:sparse_by_linear_combination}.

\begin{proof}[Proof of Theorem~\ref{thm:larger_domains}]
The vector $v_P$ satisfies
  \[v_P = \sum_{r=0}^{q^k-1}\lambda_re_r,\] for some $\lambda_r\in\{0,1\}$ and $e_r$ vectors of the standard basis of $\mathbb{R}^{q^k}$. Therefore, for every assignment $a:V\to D$, we have
\begin{equation}
\label{eq:val_g}
\val_{G,P}(a) = \langle v_P, v_{G,a}\rangle = \sum_{r=0}^{q^k-1}\lambda_r\langle e_r, v_{G,a}\rangle = \sum_{r=0}^{q^k-1}\lambda_r\val_{G,P_r}(a)
\end{equation}
where $P_r$ is the predicate corresponding to the vector $e_r$. Let $G_{\varepsilon_0} = (V,E_{\varepsilon_0}\subseteq E)$ be the subhypergraph of $G$ promised by Lemma~\ref{lem:larger_domain_singletons} for $\varepsilon_0 = \frac{\varepsilon}{q^k}$. Note that this is the same subhypergraph for all predicates $P_r$ (see proof of the lemma). Equation (\ref{eq:val_g}) is true for any other hypergraph as well, and in particular $G_{\varepsilon_0}$. Using Lemma~\ref{lem:larger_domain_singletons} for each $P_r$, we get
\begin{align*}
\left|\frac{|E|}{|E_{\varepsilon_0}|}\val_{G_{\varepsilon_0},P}(a) - \val_{G,P}(a)\right| & = \left|\frac{|E|}{|E_{\varepsilon_0}|}\sum_{r=0}^{q^k-1}\lambda_r\val_{G_{\varepsilon_0},P_r}(a) - \sum_{r=0}^{q^k-1}\lambda_r \val_{G,P_r}(a)\right| \\ 
& \leq \sum_{r=0}^{q^k-1}\lambda_r \left|\frac{|E|}{|E_{\varepsilon_0}|}\val_{G_{\varepsilon_0},P_r}(a) - \val_{G,P_r}(a)\right| \\
& \leq \sum_{r=0}^{q^k-1}\lambda_r\varepsilon_0(d_G|M_a|+\vol_G(N_a)) \\
& \leq q^k\varepsilon_0(d_G|M_a|+\vol_G(N_a)) \\
& = \varepsilon(d_G|M_a|+\vol_G(N_a)),
\end{align*}
where the first line follows from (\ref{eq:val_g}) for the different hypergraphs, the second line from the triangle inequality, the third from Lemma~\ref{lem:larger_domain_singletons}, the fourth is due to $\lambda_r\in\{0,1\}$ for all $r$, and the last is the definition of $\varepsilon_0$. Again,
  \[|E_{\varepsilon_0}| = O\left(n\varepsilon_0^{-2}\log\frac{1}{\varepsilon_0}\right) = O\left(n\varepsilon^{-2}\log\frac{1}\varepsilon\right),\] 
  so we have found an appropriate subhypergraph of $G$.
\end{proof}

\begin{remark}
  Similarly to Remark~\ref{rmk:not_constant}, if $k$ and $q$ are not constant we
  get an additional factor of $q^{2k}$.
\end{remark}

\section{Optimality of All-But-One Sparsification}
\label{sec:optimal}

One might wonder if there is a different, perhaps stronger way to define
sparsification for predicates on non-Boolean domains. The following example shows
that all-but-one sparsification is optimal.

For a hypergraph $G = (V,E)$ and a fixed assignment $a:V\to[q]$ denote $S_i =
\{v\in V: a(v) = i\}$ (so $S_0 = \zs_a$). The definition of all-but-one
sparsification lets us take a bound which depends on the sizes and volumes of
all the sets $S_i$ except for $S_{q-1}$. In fact, if we try to take a bound
which depends on fewer of these sets, the definition fails to generalise even
the most basic case of the $\cut$ predicate. To see this, it is sufficient to
consider the graph case, i.e.\ $k=2$.
Let us suppose, without loss of generality, that our bound does not depend on
$S_{q-2},S_{q-1}$. Consider the predicate $\cut:[q]^2\to\{0,1\}$ defined by
$\cut(x,y)=1\iff x\ne y$. A simple (but lengthy) argument below shows that
cliques do not have a $\cut$-sparsifier using such a definition. Therefore, no
definition with a bound which depends on ``less'' is possible, under the current
assumptions.

Let $G=K_n$ be the complete graph with $n$ vertices $v_1,\ldots,v_n$. Moreover,
let $G_\varepsilon$ be a subgraph of $G$, and consider the predicate
$\cut:[q]^2\to\{0,1\}$ defined by $\cut(x,y)=1\iff x\ne y$. 

We consider two cases:

\textbf{Case (1):} All vertices in $G_\varepsilon$ have the same degree $d>0$. 

We look at the assignment $a:V\to \{0,1\}$ defined as $a(v_1)=a(v_2) = q-2$ and for all $i>2$, $a(v_i) = q-1$. If $G_\varepsilon$ is a $\cut$-sparsifier of $G$ then
\[\left|\frac{|E|}{|E_\varepsilon|}\val_{G_\varepsilon,\cut}(a) -
\val_{G,\cut}(a)\right| \leq \varepsilon(d_G|S| + \vol_G(S)) = 0,\] where $S$ is some set which depends only on the empty sets $S_0,\hdots,S_{q-3}$. If $v_1,v_2$ are neighbours in $G_\varepsilon$ this implies
\[\frac{n-1}{d} = \frac{\frac{n(n-1)}{2}}{\frac{nd}{2}} = \frac{|E|}{|E_\varepsilon|} = \frac{\val_{G,\cut}(a)}{\val_{G_\varepsilon,\cut}(a)} = \frac{2(n-2)}{2(d-1)} = \frac{n-2}{d-1},\]  and if they are not then
\[\frac{n-1}{d} = \frac{|E|}{|E_\varepsilon|} =
\frac{\val_{G,\cut}(a)}{\val_{G_\varepsilon,\cut}(a)} =  \frac{2(n-2)}{2d} =
\frac{n-2}{d}.\] The second option is a contradiction, and the first option implies $d=n-1$, which means $|E_\varepsilon| = |E|$, so $G_\varepsilon$ is not a sparsifier.

\textbf{Case (2):} There exist two vertices $v_i,v_j$ with degrees $d_i\ne d_j$ in $G_\varepsilon$. 

We look at two assignments
$a_1:V\to\{0,1\}$ defined by $a_1(v_i)=q-2$ and $a_1(v)=q-1$ for $v\ne v_i$,
and
$a_2:V\to\{0,1\}$ defined by $a_2(v_j)=q-2$ and $a_2(v)=q-1$ for $v\ne v_j$.
Since 
\[\frac{\val_{G,\cut}(a_1)}{\val_{G_\varepsilon,\cut}(a_1)} = \frac{n-1}{d_i}\ne
\frac{n-1}{d_j} = \frac{\val_{G,\cut}(a_2)}{\val_{G_\varepsilon,\cut}(a_2)},\]
at least one side of the inequality is different from $\frac{|E|}{|E_\varepsilon|}$. Suppose without loss of generality this is the left hand side. Then
\[\left|\frac{|E|}{|E_\varepsilon|}Val_{G_\varepsilon,\cut}(a_1) -
Val_{G,\cut}(a_1)\right| > 0 = \varepsilon(d_G|S| + \vol_G(S)),\]
where again $S$ is some set not depending on $S_{q-2},S_{q-1}$, and so $G_\varepsilon$ is not a $\cut$-sparsifier of $G$.

Note that the same argument works for any predicate $P$ with $P(q-2,q-1) =
P(q-1,q-2) = 1$ and $P(q-2,q-2) = P(q-1,q-1) = 0$. Thus if a definition does not
depend on more than just $S_{q-2},S_{q-1}$, it specifically does not depend on
these two, so the same argument still works. 

\section*{Acknowledgements}

We would like to thank the anonymous referees of both the
conference~\cite{pz21:esa} and this full version of the paper.

{\small
\bibliographystyle{plainurl}
\bibliography{pz23}
}

\end{document}